\newtheorem{theorem}{\bf Theorem}
\newtheorem{proposition}{\bf Proposition}
\newtheorem{lemma}{\bf Lemma}
\DeclareMathOperator{\EX}{\mathbb{E}}% expected value
\begin{document}
\title{\fontsize{19}{22} \selectfont Contextual Bandit Learning for Machine Type Communications in the Null Space of Multi-Antenna Systems}
	\author{\IEEEauthorblockN{Samad Ali, \emph{Student Member, IEEE}, Hossein Asgharimoghaddam, \emph{Student Member, IEEE}, Nandana Rajatheva,  \emph{Senior Member, IEEE}, Walid Saad, \emph{Fellow, IEEE}, and Jussi Haapola, \emph{Member, IEEE}}
	\thanks{A preliminary version of this work appeared in the IEEE  VTC Fall 2017 \cite{samadVTC}.}
	\thanks{S. Ali, H. Asgharimoghaddam, N. Rajatheva and J. Haapola are with the Centre for Wireless Communications (CWC), University of Oulu, Finland. Emails: \{samad.ali, hossein.asgharimoghaddam, nandana.rajatheva, jussi.haapola\}@oulu.fi. W. Saad is with Wireless@VT, Bradley Department of Electrical and Computer Engineering, Virginia Tech, Blacksburg, VA, USA, Email: walids@vt.edu.}}
\maketitle
\vspace{-2cm}
\begin{abstract}
Ensuring an effective coexistence of conventional broadband cellular users with machine type communications (MTCs) is challenging due to the interference from MTCs to cellular users. This interference challenge stems from the fact that the acquisition of channel state information (CSI) from machine type devices (MTD) to cellular base stations (BS) is infeasible due to the small packet nature of MTC traffic. In this paper, a novel approach based on the concept of opportunistic spatial orthogonalization (OSO) is proposed for interference management between MTC and conventional cellular communications. In particular, a cellular system is considered with a multi-antenna BS  in which a receive beamformer is designed to maximize the rate of a cellular user, and, a machine type aggregator (MTA) that receives data from a large set of MTDs. The BS and MTA share the same uplink resources, and, therefore, MTD transmissions create interference on the BS. However, if there is a large number of MTDs to chose from for transmission at each given time for each beamformer, one MTD can be selected such that it causes almost no interference on the BS. A comprehensive analytical study of the characteristics of such an interference from several MTDs on the same beamformer is carried out. It is proven that, for each beamformer, an MTD exists such that the interference on the BS is negligible. To further investigate such interference, the distribution of the signal-to-interference-plus-noise ratio (SINR) of the cellular user is derived, and, subsequently, the distribution of the outage probability is presented. However, the optimal implementation of OSO requires the CSI of all the links in the BS, which is not practical for MTC. To solve this problem, an online learning method based on the concept of contextual multi-armed bandits (MAB) learning is proposed. The receive beamformer is used as the context of the contextual MAB setting and Thompson sampling: a well-known method of solving contextual MAB problems is proposed. Since the number of contexts in this setting can be unlimited, approximating the posterior distributions of Thompson sampling is required. Two function approximation methods, a) linear full posterior sampling, and, b) neural networks are proposed for optimal selection of MTD for transmission for the given beamformer. Simulation results show that is possible to implement OSO with no CSI from MTDs to the BS. Linear full posterior sampling achieves almost $90\%$ of the optimal allocation when the CSI from all the MTDs to the BS is known.

\end{abstract}
\begin{IEEEkeywords} Machine type communications, scheduling, fast uplink grant, multi-armed bandits, internet of things, multi-antenna communications, deep contextual bandits, Thompson sampling. \end{IEEEkeywords}

\section{Introduction} \label{sec:introduction}
Conventional wireless communications systems are designed with the goal of providing high data rates for human type users. With the introduction of the Internet-of-Things (IoT), wireless networks should provide a new type of connectivity, known as machine-type-communications (MTC). In human type communications, applications require large data exchanges such as multimedia services or web browsing. In contrast, MTC applications that rely on uplink transmission of short packets, such as smart meters, environment monitoring, and factory automation, should be supported \cite{dawyM2MMagazine}. MTC is an essential part of the development of the next generation of cellular networks \cite{walid6G, IoTin5G}. However, there are fundamental differences between human type communications and MTC, which create wireless challenges that primarily arise from the short packet nature of MTC traffic, uplink centric data transmission, and heterogeneous quality-of-service (QoS) requirements of IoT applications such as latency and security \cite{aidin_tcom}. Meanwhile, there is a need to design systems that can provide connectivity for a massive number of machine-type-devices (MTD) \cite{MassiveM2M}, which is known as massive MTC. A major MTC challenge is that the data packets are small, and, therefore, the signaling overhead associated with sending scheduling requests and channel state information (CSI) acquisition is not negligible compared to the packet size. As a result, the optimal utilization of resources, especially for uplink scheduling of MTDs and the use of multi-antenna systems becomes very challenging.
\subsection{State-of-the-art}
There has been a surge of interest in the literature \cite{surveyofaccess, RACHM2M2,rach_correlated,nora, Samad-fastuplinkgrant, samad_globecom, DI_letter, NBLTE-M,3GPP-NB-IoT, CapillaryHamid, CapillaryMain} that has recently sought to address the MTC challenge that originate from the short packet nature of IoT traffic. One of the approaches is optimizing the random access process to reduce the signaling overhead (e.g., see \cite{surveyofaccess, RACHM2M2}, and references therein).  For instance, in \cite{rach_correlated}, the authors propose to exploit the correlations between the traffic patterns of different MTDs to optimize the random access process for MTC. Non-orthogonal random access is proposed in  \cite{nora} to increase the efficiency of scheduling request transmissions from MTDs. The works in \cite{Samad-fastuplinkgrant} and \cite{samad_globecom} discuss notion of a fast uplink grant in which the MTDs do not send scheduling requests and the base station (BS) allocates resources to MTDs. The fast uplink grant requires source traffic prediction \cite{DI_letter} which is the prediction of the set of active MTDs at any given time. The work in \cite{park} studies the use of learning to improve MTD scheduling in presence of urgent messages. Moreover, to design MTD-specific technologies, solutions such as narrow-band IoT (NB-IoT) \cite{3GPP-NB-IoT} and LTE-M \cite{NBLTE-M} allocate a part of wireless spectrum only for MTC. Another major method for increasing efficiency of wireless networks for MTC is by capillary networks \cite{CapillaryMain}, \cite{CapillaryHamid}. In capillary networks, MTC nodes transmit data to a machine-type-aggregator (MTA) and then MTA forwards collected data to the cellular network. Capillary networks help in better utilization of resources and use smaller distances between MTDs and MTA to save energy. Capillary networks can also be used in combination with multiple-input multiple-output (MIMO) systems where multiple antennas increase the capacity of wireless networks by providing spatial diversity and degrees-of-freedom (DoF) gain \cite{tse2005fundamentals}.

The use of a MIMO system for MTC faces serious challenges since MIMO systems require CSI to be known at transmitter and/or receiver. To investigate MIMO for MTC, the authors in \cite{massivemimomtc01} study grant-free MTC in massive MIMO and propose methods for user activity detection and channel estimation. Moreover, in \cite{massivemimomtc02}, the authors analyze achievable rates under maximal ratio combing (MRC) and minimum mean squared error (MMSE) receivers using random matrix theory. The work in \cite{randomBeamforming} studies the performance of random beamforming for MTC by considering that pilot signals are transmitted in the downlink and uplink while the downlink channel reciprocity is used for MTDs to decide on whether to transmit or not. In \cite{massivebeamforming}, the authors study the impact of beamforming in massive MTC by investigating the outage probability for various number of antennas at the BS. Clearly, the use of multiple antennas at the BS is beneficial for MTC, however, incorporating capillary networks in MIMO systems face a challenging problem of interference from MTDs to the cellular network. There has been little work on the study of interference management between MTC and human type communications in capillary networks.

One interesting approach to manage interference in MTC is the idea of opportunistic spatial orthogonalization (OSO). The concept of OSO takes advantage of the spatial diversity of received interference in the network. This scheme was initially introduced in \cite{shen2009dynamic, shen2011opportunistic, IntDraining} for cognitive radio networks, where there are a large number of secondary user transmit candidates. In each time slot, one of the secondary users is selected for transmission, such that interference from the selected secondary transmitter is minimum after being multiplied by receive beamformer of the primary link. The ideal scenario in OSO is that interference falls into the null space of the primary receiver's beamformer, so as to eliminate interference on the primary link. The main requirements for this technique are knowledge of CSI and the existence of a large number of secondary transmit candidates. This method is further developed in other works related to MIMO interference channels in \cite{perlaza2010spectrum}, \cite{junghoon2010}, \cite{lee2011interference}. The concept of OSO in conventional systems has two key drawbacks:
\begin{itemize}
	\item Considering a very large number of secondary users is not realistic in normal human type communications.
	\item The receive beamformer of the primary link changes at each time step. This means that the secondary user's opportunity to transmit can be lost in the next time step and it might not be able to transmit its data.
\end{itemize}
\subsection{Contributions}
Due to these limitations, it is not practical to implement OSO in conventional human type communications. However, this approach can be very suitable for MTC in capillary networks in which there are thousands of MTDs that want to transmit their data to an MTA. In MTC, the above-mentioned drawbacks do not exist due to the following reasons. First, the assumption of a very large number of users is realistic and natural for MTC. Second, MTDs transmit small packets and even a small transmit opportunity is enough for the MTD to transmit its data. We consider OSO for MTC where radio resources are shared MTC and human type cellular users in capillary networks. OSO is used for interference management where for each beamformer at the BS, one MTD is selected such that it causes no interference on BS. However, finding such an MTD is a challenge since it requires the CSI knowledge of all the links the BS and acquisition of such CSI is impractical. To implement OSO for MTC, we propose solution which exploits the channel characteristics of stationary MTD links using the well-known one-ring model \cite{tse2005fundamentals}. Moreover, an online learning method based on the multi-armed bandit (MAB) theory \cite{sutton1998reinforcement} is presented to select the best MTD for resource sharing. In particular, we use contextual bandits \cite{deep_contextual_bandits} which is a form of MAB learning where at each time a context is revealed to the learning agent, and, for each context, the best MTD is selected.

The main contribution of this paper is therefore to develop a novel framework, based contextual bandit learning for effectively leveraging OSO to enable coexistence between MTC and conventional cellular communications. Summary of the contributions of this paper are:
\begin{itemize}
    \item We present the idea of OSO for MTC and provide theoretical analysis to prove that, in MTC, it is possible to find an MTD that causes no interference to the BS.
	\item We study the effect of OSO on the performance of the human-type device by deriving the probability distribution function (PDF) of the signal-to-interference-plus-noise ratio (SINR) of the human type device. Moreover, we derive a closed-form expression for the outage probability of the human-type device as a function of on the number of MTDs in the system.
	\item Since it is impossible to have CSI from all the MTDs at the BS, we propose a novel learning approach based on contextual multi-armed bandits to find the best MTD for transmission. We first provide the channel characteristics based on the well-known one ring model for the human-type device and MTDs. Then, by using the generated channels, in our online learning approach, at each time step, we use the given receive beamformer in the BS as the context in our contextual bandit algorithm. 
	\item We propose a contextual bandit solution that is based on the well-known Thompson sampling method for MTD selection. We design the feature vector for the learning models by separating the real and imaginary parts of the receive beamformer. This feature vector is given the learning agent as the context in order to make the MTD selection process. Therefore, the learning agent acts as a function that maps every context to an MTD. Since the number of combinations of the feature vectors is unlimited, function approximation is proposed to model the mapping function. In the proposed framework, these function approximators are used to approximate the posterior distributions of Thompson sampling. First, we use a linear model which is based on linear regression learning. Second, we use neural networks with various sampling methods. We compare the performance of the proposed solutions to a baseline uniform sampling policy. Our proposed solution is then shown to enable the BS to select an MTD for resource sharing with the cellular user without any knowledge on the CSI of the MTD links. 
	\item Extensive simulations are carried out to evaluate the effectiveness of the proposed solutions in enabling OSO as a practical method for enabling coexistence of human-type and MTC communications.
\end{itemize}

The rest of the paper is organized as follows. Section \ref{sysmodel} presents the system model and problem formulation. In Section \ref{problem}, we introduce the idea of OSO for MTC, provide theoretical proof that, in MTC, it is possible to find an MTD for OSO, and study the performance of human type communications under interference from MTC. The contextual bandit learning framework is presented in Section \ref{deep_contextual}. Simulation results are presented in Section \ref{Simulationresults} and conclusions are drawn in Section \ref{conclutions}.

\section{System Model}\label{sysmodel}
Consider the uplink of a wireless cellular network having a single BS, a single human-type device, a set $\mathcal{K}$ of $K$ machine-type-device (MTDs), and, a machine-type-aggregator (MTA). The BS has $M$ antennas and the human-type device and MTDs are single antenna devices. The human-type device transmits data in the uplink to the BS and MTDs transmit data to the MTA. We consider an orthogonal frequency division multiple access (OFDMA) system where radio resources are divided into orthogonal resource blocks (RBs) in the time and frequency domains. These RBs are shared between MTDs and the human-type device which leads to an interference network where MTD and the human-type device transmissions cause interference at the BS and MTA, respectively. We assume that MTDs transmit in orthogonal RBs and only one MTD shares the RB with the human-type device. An illustrative system model is shown in Fig \ref{fig:sysmodel2}.

\begin{figure}[t]
 \begin{center}
   \includegraphics[width=0.5\textwidth]{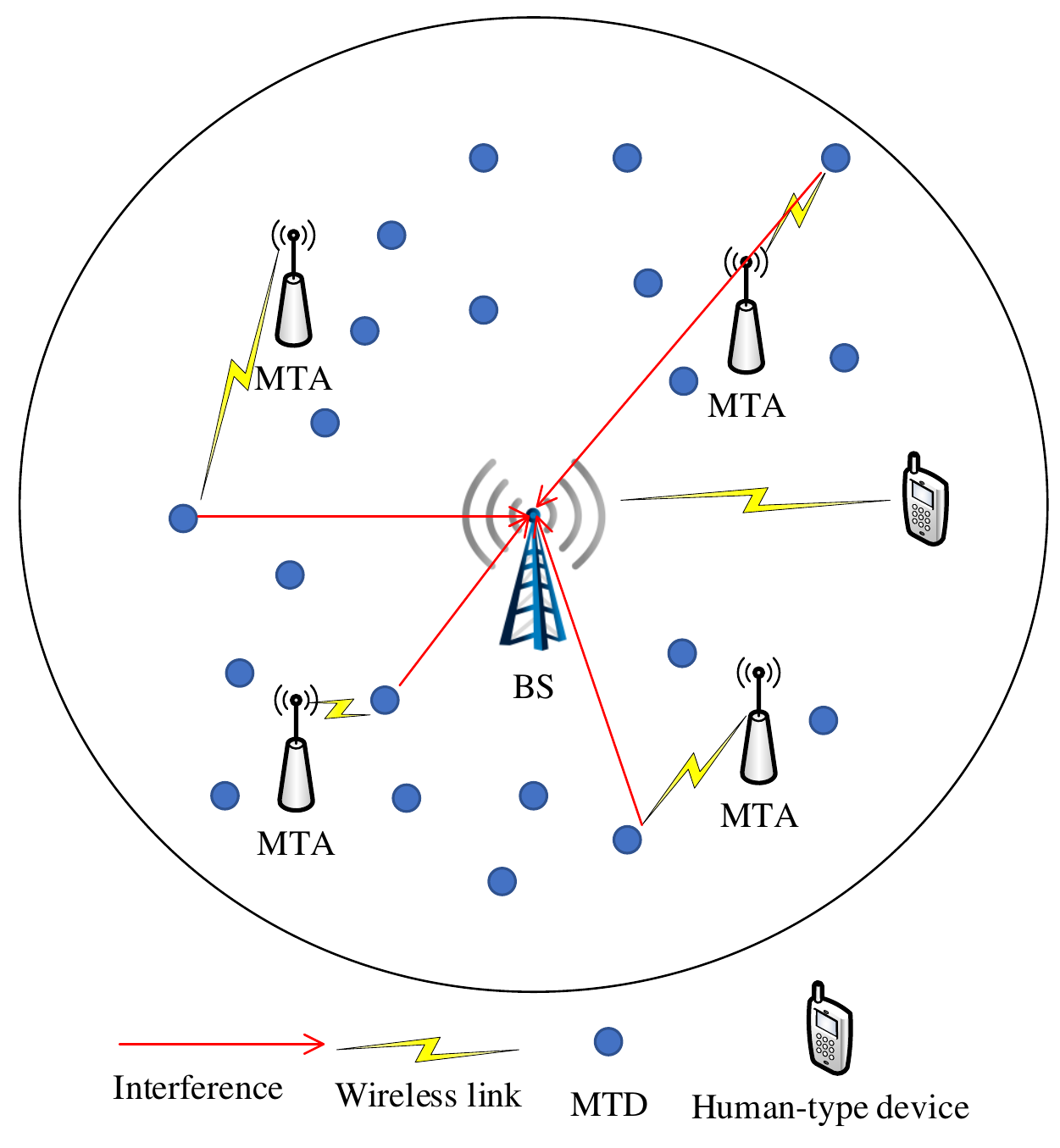}
 \end{center}
 \caption{\small System model}
 \label{fig:sysmodel2}
\end{figure}

At the BS, the received signal  $\boldsymbol y_{\rm BS}$ from the human-type device is given by:
\begin{equation}
\begin{aligned}\label{eq:RUplink}
&{\boldsymbol y}_{bs}& = {\boldsymbol h}_{c}x_{c} + \boldsymbol{h}_{k,B}x_{k} + {\boldsymbol n}_{BS},\\
\end{aligned}
\end{equation}
where  $x_{c}$ and $x_{m}$ are the transmit symbols of the human-type device and MTD respectively. $\boldsymbol{h}_{c} \in \mathbb{C}^M $ is human-type device to BS channel. MTD is selected from $K$ devices with channels $\boldsymbol{h}_{k,B} \in \{\boldsymbol{h}_{1,B},...,\boldsymbol{h}_{K,B} \} $. ${\boldsymbol n}_{\rm BS}$ represents the white Gaussian noise. After applying the receive beamformer at the BS, receiver has the following signal:
\begin{equation}
\begin{aligned}\label{eq:RUplinkbf}
&{\hat{\boldsymbol y}}_{bs}& = \boldsymbol{w}_{c}\boldsymbol{h}_{c}x_{c} + \boldsymbol{w}_{c}{\boldsymbol h}_{k,B}x_{k} + \boldsymbol{w}_{c}N_0.
\end{aligned}
\end{equation}
From \eqref{eq:RUplink} we can now write the SINR of the human-type device at the BS:
\begin{equation}
\begin{aligned}\label{eq:MBSSINR}
\gamma_{c} =  \frac{P_c |\boldsymbol{w}_{c}\boldsymbol{h}_{c}|^2}{P_k| \boldsymbol{w}_{c}\boldsymbol{h}_{k,B} |^2 + | \boldsymbol{w}_{c}|^2N_0 }.
\end{aligned}
\end{equation}
In a similar manner the received signal at the MTA can be written as follows:
\begin{equation}
\begin{aligned}\label{eq:RMTG}
y_{m} = {h}_{k}x_{k} + {h}_{c,M}x_{c} + {n}_{s},\\
\end{aligned}
\end{equation}
where ${h}_{k}$ is the channel between the MTD and MTA, and ${h}_{c,M}$ is the channel between the human-type device and the MTA. The Gaussian noise in MTA is ${n}_{\rm s}$.  The received SINR at MTA can be given by:
\begin{equation}
\begin{aligned}\label{eq:scSINR}
\gamma_{k} =  \frac{P_k|{h}_{k}  |^2}{P_c| {h}_{c,M} |^2 + N_0 },
\end{aligned}
\end{equation}
where $P_k$ and $P_c$ are, respectively, the MTD transmit power and the human-type device transmit power. Since we consider that the receive beamformer in the BS is designed to receive the data from the human-type device optimally, we assume that it can perfectly decode the data of the human-type device. We also assume that there is a backhaul link between the BS and the MTA, and therefore, the MTA can subtract the human-type device data from the received signal at the MTA using successive interference cancellation. Hence, we neglect the interference from the human-type device on the MTA and focus on the interference from MTDs on the BS. Clearly, if the receive beamformer $\boldsymbol{w_{\rm c}}$ is designed without taking the channel from MTDs into account, the interference on the BS can be severe. Moreover, taking the MTD transmission into account will require sending pilots to the BS to estimate the channel $\boldsymbol{h}_{kb}$. However, the amount of resources required for CSI estimation is large with respect to the data packets size of MTC applications. Moreover, it is not possible to estimate channels of all MTDs to allocate resources to only one of them. Therefore, CSI acquisition for MTDs is an inefficient method. Most importantly, even if the CSI is calculated perfectly at the BS, the BS must still use degrees of freedom to null out the interference from the MTD. This will lead to a decrease in the achievable rate for the human-type device at the BS. Another major issue is that the receive beamformer for the human-type device can be calculated once for each coherence period if the MTD is not taken into account. However, when considering the MTD, for each new MTD with a different channel, the receive beamformer for the human-type device must be calculated again to mitigate the interference from the new MTD. Therefore, designing a beamformer for a human-type device while considering the MTDs becomes very inefficient and suboptimal, and degrades the performance of the HTD transmissions. 

\begin{figure}[t]
	\begin{center}
		\includegraphics[width=0.5\textwidth]{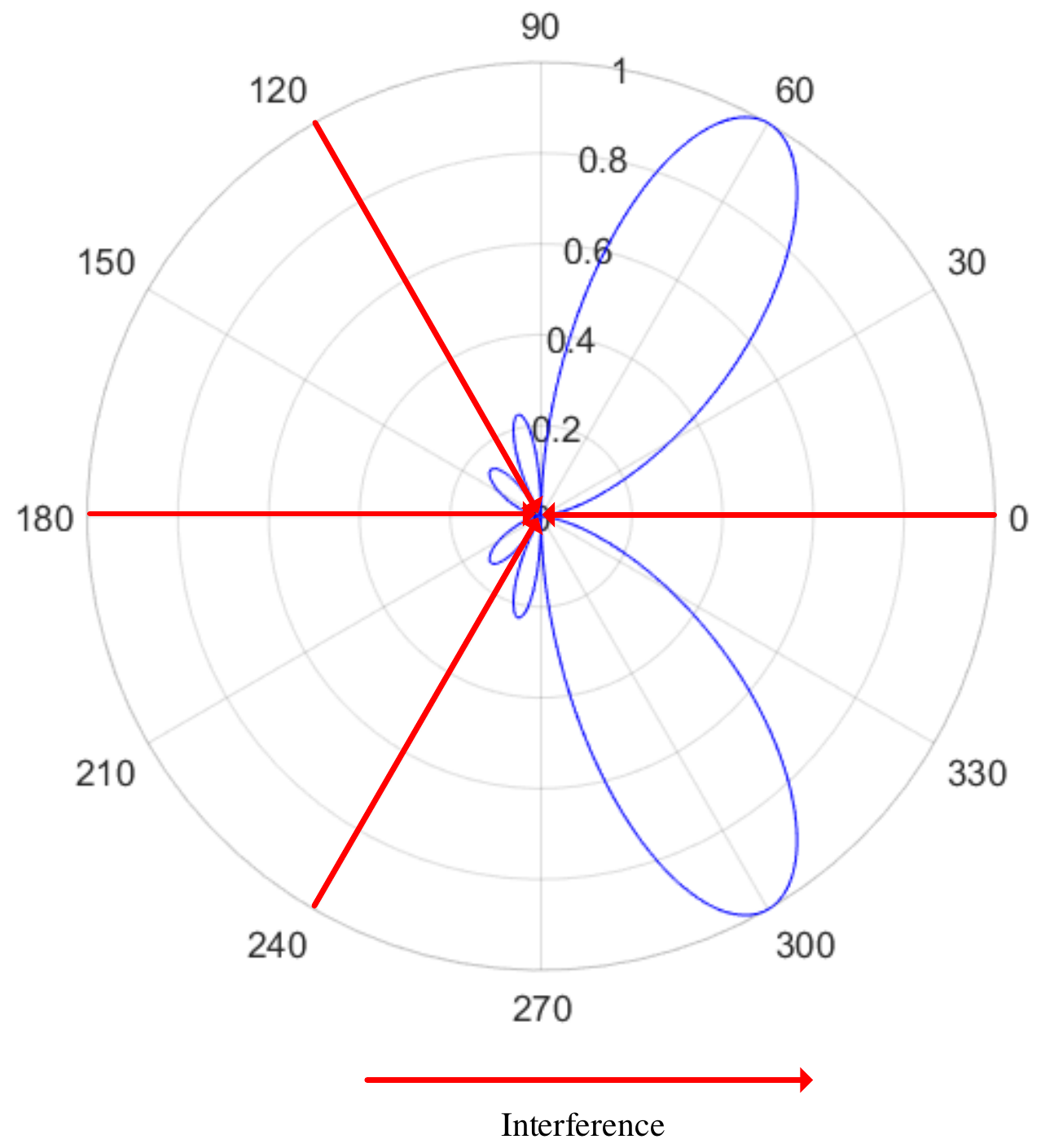}
	\end{center} 
	\caption{\small An example of a receive beamformer in polar coordinates that shows the direction for receiving the signal of interest and some example direction of interference that will not have a negative effect on the received SINR.}
	\label{fig:beamformer}
\end{figure}
\section{Opportunistic Spatial Orthogonalization for MTC}\label{problem}
We assume that the BS is responsible of the beamformer design for the human-type device and also for the scheduling of MTDs to transmit their data to the MTA. In the BS, for each coherence interval, a receive beamformer $\boldsymbol{w_{\rm c}} \in \mathbb{C}^M $ is designed without considering the interference from the MTD transmissions. The optimal beamformer for a single user SIMO system is maximal ratio combining (MRC) in which the beamformer is a normalized conjugate of the channel $\boldsymbol{h_{\rm c}} \in \mathbb{C}^M $ from the human-type device to the BS. For each beamformer $\boldsymbol{w}_{c}$, the BS can also select one MTD, $k \in \mathcal{K}$, such that received SINR of the uplink user is maximized. If the BS has the CSI from all the MTDs, this can be easily performed and the MTDs that cause a small amount of negligible interference on the BS can be scheduled. Furthermore, if the number of MTDs is very large, then the selected MTD might cause almost no interference to the BS at all since it can be selected such the interference originating from the selected MTD is in the null space of the receive beamformer at the BS. Next, we prove that, in wireless systems with a large number of MTDs that are \emph{interference candidates}, the MTD with minimum interference causes almost no interference to the BS. 

For our system model, when the MTD with minimum interference is scheduled to transmit using the same RBs that are allocated to the human-type device, the SINR of the human-type device will be:
\begin{equation}
\begin{aligned}\label{eq:MBSSINROSO}
\gamma_{c} =  \frac{P_c |\boldsymbol{w_{c}}\boldsymbol{h}_{\rm c}|^2}{\min_{k=1,...K}P_i| \boldsymbol{w_{c}}\boldsymbol{h}_{k,B} |^2 + | \boldsymbol{w_{c}} |^2N_0 }.
\end{aligned}
\end{equation}
As mentioned earlier, when there is one user and a BS with multiple antennas, the optimal receive method is MRC \cite{tse2005fundamentals} give by $\boldsymbol{w}_{c} = \boldsymbol{h_{\rm c}}^H$. Assume MTD $k$ as the one with minimum interference, the the SINR simplifies to:
\begin{equation}
\begin{aligned}\label{eq:MBSSINROSOMRC}
\gamma_{c} =  \frac{P_c \lVert \boldsymbol{h}_{c}\lVert^2}{\frac{P_k| \boldsymbol{h}_{c}^H\boldsymbol{h}_{kb} |^2}{\lVert{\boldsymbol h}_{\rm c}\lVert^2}  + N_0}.
\end{aligned}
\end{equation}
To further evaluate the performance of this scenario, we can write the probability of outage for the cellular user:
\begin{equation}
\begin{aligned}\label{eq:pout}
P_{out} =  \textrm{Pr}\left\{\frac{P_c \lVert{\boldsymbol h}_{c}\lVert^2}{\frac{P_k| \boldsymbol{h_{c}}^H\boldsymbol{h}_{kb} |^2}{\lVert \boldsymbol{h}_{c}\lVert^2}  + N_0} \leq \delta_{\textrm{th}} \right\}
\end{aligned}
\end{equation}
where $\delta_{th}$ is the maximum allowed outage probability. To avoid an outage, the interference term should be smaller than a threshold, show by $\delta_{\textrm{int}}$:

\begin{equation}\label{eq:QoS}
\begin{aligned}
\frac{P_k| \boldsymbol{h}_{c}^H{\boldsymbol h}_{k,B} |^2}{\lVert{\boldsymbol h}_{c}\lVert^2} < \delta_{\textrm{int}}
\end{aligned}
\end{equation}
If \eqref{eq:QoS} is satisfied, then, the selected MTD $k$ can transmit data to its MTA without harmful interference on the human-type device. In the following, we provide a theoretical analysis of the interference term in \eqref{eq:QoS} and show that, if the MTD is selected from a large set of candidates, \eqref{eq:QoS} is satisfied. In our analysis, the first step is to calculate the distribution of the interference term in the left-hand side of \eqref{eq:QoS}. To do that, we first prove that the interference from all possible MTD is independent of each other after going through the same receive beamformer. 

\subsection{Independence of Interference Candidates}
When one cellular user is transmitting to the BS and we want to select one out of $K$ MTDs for resource sharing, the interference term after being multiplied with receive beamformer appears in the denominator of \eqref{eq:MBSSINR}. We use random variables $X_{k} = \frac{ \boldsymbol{h}_{c}^H\boldsymbol{h}_{k,B} }{ |\boldsymbol{h}_{c}|}, k \in \mathcal{K}$ to capture the interference term. The following lemma holds:
\begin{lemma}\label{lemma1}
Random variables $X_k, k\in \mathbb{K}$ are independent.
\end{lemma}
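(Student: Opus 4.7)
The plan is to decouple the common dependence on $\boldsymbol{h}_c$ from the per-MTD randomness and then invoke the rotational invariance of the isotropic complex Gaussian distribution. First I would note that since the MTDs are distinct physical devices, the channels $\boldsymbol{h}_{1,B},\dots,\boldsymbol{h}_{K,B}$ are mutually independent and independent of $\boldsymbol{h}_c$. I would then condition on $\boldsymbol{h}_c$ and set $\boldsymbol{u}=\boldsymbol{h}_c/\|\boldsymbol{h}_c\|$, so that $X_k=\boldsymbol{u}^H\boldsymbol{h}_{k,B}$. Given $\boldsymbol{u}$, each $X_k$ is a deterministic linear functional of $\boldsymbol{h}_{k,B}$ alone, and since the $\boldsymbol{h}_{k,B}$ are independent across $k$, the family $\{X_k\}_{k=1}^K$ is conditionally independent given $\boldsymbol{h}_c$.

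The second and main step is to upgrade conditional independence to unconditional independence. Under the i.i.d.\ Rayleigh fading assumption $\boldsymbol{h}_{k,B}\sim\mathcal{CN}(\boldsymbol{0},\sigma_k^2 I_M)$ used in the analytical portion of the paper, rotational invariance of the isotropic complex Gaussian gives $\boldsymbol{u}^H\boldsymbol{h}_{k,B}\sim\mathcal{CN}(0,\sigma_k^2)$ for every unit vector $\boldsymbol{u}$. Hence the conditional law of $X_k$ given $\boldsymbol{h}_c$ does not depend on $\boldsymbol{h}_c$, which means each $X_k$ is marginally independent of $\boldsymbol{h}_c$. Combining this with the conditional independence from the previous step and invoking the tower property,
\begin{equation*}
P(X_1,\dots,X_K)=\mathbb{E}_{\boldsymbol{h}_c}\!\left[\prod_{k=1}^K P(X_k\mid\boldsymbol{h}_c)\right]=\prod_{k=1}^K P(X_k),
\end{equation*}
which is the claimed joint independence.

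The hard part is precisely this passage from conditional to unconditional independence, since a shared conditioning variable typically induces correlations; the argument only goes through because isotropy of the fading law makes each $X_k$ marginally free of $\boldsymbol{h}_c$. I would state the i.i.d.\ Rayleigh assumption explicitly at the start of the proof and remark that for a non-isotropic model (for example, the one-ring covariance $\Sigma_k$ used later in the learning section), the conditional variance $\boldsymbol{u}^H\Sigma_k\boldsymbol{u}$ depends on $\boldsymbol{u}$ and only conditional independence survives, which is nevertheless sufficient for the SINR distribution and outage-probability derivations that follow in Section~\ref{problem}.
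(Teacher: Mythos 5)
Your proof is correct and follows essentially the same route as the paper: both arguments rest on showing that, conditioned on $\boldsymbol{h}_c$, each $X_k$ has the fixed law $\mathcal{CN}(0,1)$ that does not depend on the conditioning variable (the paper verifies this by computing the conditional mean and variance explicitly; you invoke rotational invariance of the isotropic complex Gaussian, which is the same fact). Your write-up is in fact the more complete of the two, since you make explicit the two ingredients the paper leaves implicit --- conditional independence across $k$ inherited from the mutual independence of the $\boldsymbol{h}_{k,B}$, and the tower-property step that upgrades conditional independence plus a context-free conditional law to unconditional joint independence --- and your closing remark that only conditional independence survives under a non-isotropic (e.g., one-ring) covariance is a useful caveat the paper does not record.
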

\begin{proof}
By considering Rayleigh fading for channels between the MTDs and the BS, the elements of $\boldsymbol{h}_{k,B}$ will follow i.i.d, zero-mean, complex Gaussian distributions. The distribution of $X_{k}$  conditioned on $\boldsymbol{h}_{c}$ will also be complex Gaussian \cite{proakis2001digital}. The mean of the random variable $X_{k}$ conditioned on $\boldsymbol{h}_{c}$ will be:
\begin{equation}\label{mean}
\EX[X_{k}|\boldsymbol{h}_{c}] = \frac{\boldsymbol{h}_c^H}{|\boldsymbol{h}_{c}|} \EX[\boldsymbol{h}_{k,B}] = 0,
\end{equation}
subsequently, the variance is derived as follows:
\begin{align}\label{variance}
\EX[|X_{k}|^2|\boldsymbol{h}_{c}] & = \frac{\boldsymbol{h}_{c}^H \EX[\boldsymbol{h}_{k,B}\boldsymbol{h}_{k,B}^H]\boldsymbol{h}_{c}}{|\boldsymbol{h}_{c}|^2} \\
& = \frac{\boldsymbol{h}_{c}^H \textbf{I}_N \boldsymbol{h}_c}{|\boldsymbol{h}_{c}|^2} \\
& = 1.
\end{align}
A complex Guassian distribution with mean $\mu =0$ and variance $\sigma^2 = 1$ is standard complex normal distribution. Therefore, the distribution of $X_k$ conditioned on $\boldsymbol{h}_c$ has a PDF $f_{X_{k}}(X_k|\boldsymbol{h}_{c}) = \frac{1}{\pi} e^{-|X_k|^{2}}$ which is independent of $\boldsymbol{h}_{c}$. Given this independence, random variables $X_{k}$ are independent, and, hence, the interference which is $X_{k}^2$ term in \eqref{eq:MBSSINR} will be composed of independent random variables.
\end{proof}
Lemma \ref{lemma1} is used in the following theorem to prove that the probability of finding an MTD that will cause almost no interference on the BS becomes one as the number of MTDs increase.
\begin{theorem}\label{theorem1}
The probability to find an MTD that will satisfy the criteria in (\ref{eq:QoS}) becomes one when $K \rightarrow \infty$.
\end{theorem}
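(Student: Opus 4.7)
The plan is to recast the criterion in \eqref{eq:QoS} as a condition on the minimum of $K$ i.i.d.\ nonnegative random variables and then invoke the elementary fact that the survival function of the minimum decays geometrically in $K$. Specifically, with $X_k = \boldsymbol{h}_c^H \boldsymbol{h}_{k,B}/|\boldsymbol{h}_c|$ as defined just before Lemma~\ref{lemma1}, the event that MTD $k$ meets the interference constraint is $P_k|X_k|^2 < \delta_{\textrm{int}}$, and the event that \emph{some} MTD out of $K$ meets it is $\min_{k=1,\dots,K} P_k|X_k|^2 < \delta_{\textrm{int}}$. So it suffices to show that this minimum tends to $0$ in probability.

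First I would use Lemma~\ref{lemma1}, which establishes that, conditional on $\boldsymbol{h}_c$, the $X_k$ are i.i.d.\ standard complex normal. Consequently $|X_k|^2$ is (conditionally, and hence also unconditionally by tower property) distributed as a unit-mean exponential random variable, with $\Pr\{|X_k|^2 \ge t\} = e^{-t}$ for $t\ge 0$. Setting $\epsilon = \delta_{\textrm{int}}/P_k > 0$, the i.i.d.\ structure gives
\begin{equation}
\Pr\bigl\{\min_{k=1,\dots,K} |X_k|^2 \ge \epsilon\,\big|\,\boldsymbol{h}_c\bigr\}
= \prod_{k=1}^K \Pr\{|X_k|^2 \ge \epsilon\,|\,\boldsymbol{h}_c\}
= e^{-K\epsilon}.
\end{equation}
Taking expectation over $\boldsymbol{h}_c$ leaves the bound unchanged because the conditional law does not depend on $\boldsymbol{h}_c$.

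Finally I would take the complementary event to obtain
\begin{equation}
\Pr\Bigl\{\tfrac{P_k|\boldsymbol{h}_c^H\boldsymbol{h}_{k,B}|^2}{\lVert \boldsymbol{h}_c\rVert^2} < \delta_{\textrm{int}}\ \text{for some }k\Bigr\} = 1 - e^{-K\epsilon},
\end{equation}
which tends to $1$ as $K\to\infty$, yielding the theorem. I do not anticipate any real obstacle here: the only subtlety is being careful that Lemma~\ref{lemma1} actually supplies mutual (not merely pairwise) independence of the $X_k$, which it does because each conditional density factorizes and is independent of the conditioning variable $\boldsymbol{h}_c$. A minor presentational point is to note explicitly that $P_k$ and $\delta_{\textrm{int}}$ are fixed positive constants so that $\epsilon$ does not vanish with $K$; otherwise the geometric decay argument collapses.
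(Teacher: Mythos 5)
Your proposal is correct and follows essentially the same route as the paper: both arguments reduce the criterion to the minimum of $K$ i.i.d.\ exponential interference powers (via Lemma~\ref{lemma1}) and observe that its CDF, $1-e^{-K\epsilon}$ at any fixed threshold $\epsilon>0$, tends to one as $K\to\infty$. Your explicit product form of the survival function and the remark that the conditional law does not depend on $\boldsymbol{h}_c$ are minor presentational refinements of the paper's order-statistics derivation, not a different argument.
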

\begin{proof}
We start the proof by finding the distribution of the interference term. Since the distribution of $\boldsymbol{h}_c$ is $f_{x_{i}}(x_i) = \pi^{-1} e^{-|x_i|^{2}}$, then the interference power $|x_i|^2$ follows a central chi-square distribution with two degrees of freedom (which is an exponential distribution). Let us define the distribution of $z_i = |x_i|^2$ so that the distribution of interference term will be:
\begin{equation}\label{distribution}
f_{z_{i}}(z_i) = \frac{1}{p_i} e^{-\frac{z_i}{p_i}}.
\end{equation}

We now use order statistics \cite{papoulis1965probability} to find the distribution of the minimum term of the interference. Consider random variable $Y=\min\{z_1,z_2,...,z_K\}$. It is easy to show that the cumulative distribution function (CDF) of the minimum of a group of i.i.d exponential random variables is given by:
\begin{equation}\label{mincdf}
F_{Y}(y) = 1 - [e^{-\lambda Ky}],
\end{equation}
PDF of $Y$ follows as:
\begin{equation}\label{minpdf}
f_{Y}(y) = \lambda Ke^{-\lambda Ky}.
\end{equation}
From this distribution, it is clear that as $K \rightarrow \infty$, the probability, $P(Y < \delta_{\textrm{min}}) = 1$. 

This completes the proof.
\end{proof}
From Theorem \ref{theorem1}, we can see that whenever the network has a large number of MTDs that can be selected for resource sharing with an human-type device, if the MTD with minimum interference power on the BS is selected, the human-type device and MTD can transmit at the same time and on the same frequency band, such that the interference on the HTD is negligible. We should note that, for large $K$, Theorem \ref{theorem1} also holds for $K-1$, $K-2$, and $K-l$ MTDs, where $l$ is a small number. This means that $l$ resource blocks allocated for the human-type device can be shared with MTDs, while the interference on the human-type device is negligible. 

\subsection{Distribution of SINR and Outage Probability}
To study the performance of the proposed OSO for MTC, we derive the closed form expressions for the distribution of the SINR and the outage probability of the HTD while the MTD with minimum interference on HTD is transmitting on the same RB.
\begin{proposition}\label{prpopsition1}
The distribution of the SINR in (\ref{eq:MBSSINROSO}) is:
\begin{equation}\label{SINIR_distribution}
f_y(y)=\frac{\lambda e^{\lambda \sigma^2}}{P^M\Gamma(M)} \frac{y^{(M-1)}}{(\lambda + y/p)^{M+1}}\Gamma (M+1, (\lambda + y/p)\sigma^2),
\end{equation}
where $\Gamma(x,y)$ is upper incomplete gamma function.
\end{proposition}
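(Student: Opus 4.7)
The plan is to exploit the independence established in Lemma~\ref{lemma1} between the signal power in the numerator of~\eqref{eq:MBSSINROSO} and the (already-minimized) interference term in the denominator. Because $\boldsymbol{h}_c$ has i.i.d.\ zero-mean complex Gaussian entries, $S := \lVert \boldsymbol{h}_c \rVert^2$ is Gamma$(M,1)$ with density $f_S(s) = s^{M-1} e^{-s}/\Gamma(M)$. The ratio $X_k = \boldsymbol{h}_c^H \boldsymbol{h}_{k,B}/|\boldsymbol{h}_c|$ is standard complex normal with a conditional law (given $\boldsymbol{h}_c$) that does \emph{not} depend on $\boldsymbol{h}_c$, so $Y := \min_{k=1,\ldots,K} |X_k|^2$ is exponential with rate $\lambda$ (absorbing the factor $K$ coming from Theorem~\ref{theorem1}) and is independent of $S$. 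Writing the SINR as $\gamma = P_c S/(P_k Y + N_0)$ then reduces the problem to computing the density of a ratio of two independent random variables with known marginals.

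The main calculation is a change of variables. I would fix $Y$ and transform $(S, Y) \mapsto (\gamma, Y)$ using $S = \gamma(P_k Y + N_0)/P_c$, whose Jacobian is $(P_k Y + N_0)/P_c$. Multiplying the Gamma density $f_S$ and the exponential density $f_Y$ by the Jacobian and collecting powers yields
\begin{equation*}
f_{\gamma,Y}(y,u) = \frac{\lambda\, y^{M-1}}{\Gamma(M)\, P_c^M}\, (P_k u + N_0)^M \exp\!\bigl(-(\lambda + y P_k/P_c)\, u - y N_0/P_c\bigr).
\end{equation*}
Marginalizing requires evaluating $\int_0^\infty (P_k u + N_0)^M \exp\!\bigl(-(\lambda + y P_k/P_c) u\bigr)\, du$, which I would handle by the substitution $t = P_k u + N_0$ (shifting the lower limit to $N_0$) followed by $v = (\lambda + y P_k/P_c)\, t / P_k$. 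This turns the integrand into $v^M e^{-v}$ with lower limit $(\lambda + y/p)\sigma^2$, producing exactly the upper incomplete gamma function $\Gamma(M+1, (\lambda + y/p)\sigma^2)$, provided one identifies $p = P_c/P_k$ and $\sigma^2 = N_0/P_k$.

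The last step is bookkeeping of exponential prefactors. The factor $e^{-y N_0/P_c}$ left outside the integral must cancel against the factor $e^{(\lambda + y P_k/P_c) N_0/P_k}$ generated by shifting the lower limit in the first substitution; all $y$-dependent exponential terms collapse, leaving only the clean constant $e^{\lambda \sigma^2}$ that appears in~\eqref{SINIR_distribution}. Combining with the $v^{M+1}$ in the denominator from the second substitution and the leading $\lambda y^{M-1}/(\Gamma(M) P^M)$ gives the claimed closed form. The step I expect to be most delicate is this cancellation of exponential terms: an inattentive choice of substitution leaves a residual $y$-dependent exponential and obscures the compact form, so the substitutions must be chosen precisely to align the shift in the integration limit with the exponent produced by the Jacobian.
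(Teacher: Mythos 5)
Your proposal is correct and follows essentially the same route as the paper: identify the numerator $\lVert\boldsymbol{h}_c\rVert^2$ as Gamma$(M)$, identify the minimized interference-plus-noise term as a shifted exponential via the order-statistics result of Theorem~\ref{theorem1}, invoke the independence from Lemma~\ref{lemma1}, and compute the density of the ratio, which reduces to $\int_{\sigma^2}^{\infty} w^{M} e^{-(\lambda+y/p)w}\,dw$ and hence to the upper incomplete gamma function. Your explicit Jacobian change of variables is just the derivation of the ratio-density formula $f_y(y)=\int w\,f_{x,w}(yw,w)\,dw$ that the paper cites directly, so the two arguments coincide up to where the constant $P_k$ is absorbed into $\lambda$, $p$, and $\sigma^2$.
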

\begin{proof}
The SINR is the ratio of two terms. The nominator of the SINR equation is Gamma distributed with PDF:
\begin{equation}\label{nominator}
f_x(x) = \frac{1}{P^M\Gamma(M)}x^{M-1}e^{-(x/P)},
\end{equation}
and the nominator of the SINR has an exponential distribution with PDF:
\begin{equation}\label{denom}
f_w(w) = \frac{K}{P_m}e^{-(K/P_m)(w-\sigma^2)} = \lambda e^{-\lambda (w-\sigma^2)}.
\end{equation}
Let $y$ be a random variable that represents the SINR. Since we have $y=x/w$, from the ratio of two random variables \cite{papoulis1965probability}, we can write the PDF of $y$ as:
\begin{equation}\label{ratio}
f_y(y) = \int_{\sigma ^2}^{\infty}w f_{x,w}(yw,w)dw.
\end{equation}
Since the terms in the nominator $f_x(x)$ and the denominator $f_w(w)$ of the SINR are independent, we have:
\begin{equation}\label{mutual}
f_{x,w}(x,w) = f_x(x) f_w(w),
\end{equation}
therefore, the PDF of the SINR is derived from:
\begin{equation}\label{ratio2}
f_y(y) = \int_{\sigma ^2}^{\infty}\frac{w}{P^M\Gamma(M)}(yw)^{M-1}e^{-(yw/P)}\lambda e^{-\lambda (w-\sigma^2)} dw,
\end{equation}
which can be computed as follows:
\begin{align}\label{sinrdistribution}  \nonumber
f_y(y) & = \frac{\lambda e^{\lambda \sigma^2}}{P^M\Gamma(M)} y^{(M-1)} \int_{\sigma ^2}^{\infty}w^M e^{-(yw/P)}  e^{-\lambda w} dw\\ \nonumber
&\!=\!\frac{\lambda e^{\lambda \sigma^2} }{P^M\Gamma(M)}y^{(M-1)} \int_{\sigma ^2}^{\infty}w^M e^{-(\lambda + y/p)w} dw \\
&\!=\!\frac{\lambda e^{\lambda \sigma^2}}{P^M\Gamma(M)}\frac{y^{(M-1)}}{(\lambda + y/p)^{M+1}}\Gamma (M+1, (\lambda + y/p)\sigma^2).
\end{align}
where $\Gamma(x,y)$ is the incomplete Gamma function. This completes the proof.
\end{proof}
The SINR distribution can be used to derive the outage probability of the proposed method. The derived outage probability distribution can be used to provide an analytical evaluation of the performance of the proposed OSO method. Such an analytical result can provide us with the number of MTDs that are needed in the MTC system so that we can select the best MTD for transmission. In the following theorem, the derivation of the outage probability is presented.
\begin{theorem}\label{theorem2}
For any give SINR threshold $\beta$, the outage probability is given by:
\begin{align}\label{outage} \nonumber
P(P_y(y)\leq \beta) = F_y(\beta)= &  1 -  \sum_{k=0}^{M-1} \frac{1}{P^K k!} \lambda e^{\lambda \sigma^2} \frac{\beta^k}{(\lambda + \frac{\beta}{P})^{k+1}} \Gamma(k+1,(\lambda + \frac{\beta}{P})\sigma^2) 
\end{align}
\end{theorem}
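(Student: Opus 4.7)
The plan is to avoid integrating the PDF in \eqref{SINIR_distribution} directly (the incomplete Gamma factor inside the integrand would make that route painful) and instead go back one step to the ratio representation used in the proof of Proposition \ref{prpopsition1}. Writing $Y = X/W$ with $X$ the Gamma-distributed numerator from \eqref{nominator} and $W$ the shifted-exponential denominator from \eqref{denom}, the outage probability is
\begin{equation*}
F_y(\beta) \;=\; \Pr(X \leq \beta W) \;=\; \int_{\sigma^2}^{\infty} F_X(\beta w)\, f_W(w)\, dw,
\end{equation*}
where conditioning on $W$ and independence of $X,W$ (already established in Proposition \ref{prpopsition1}) are what justify the split.

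Next I would plug in the Erlang-form CDF of the Gamma with shape $M$ and scale $P$,
\begin{equation*}
F_X(x) \;=\; 1 \;-\; e^{-x/P}\sum_{k=0}^{M-1}\frac{1}{k!}\Bigl(\frac{x}{P}\Bigr)^{k},
\end{equation*}
which is the standard finite-sum identity obtained by repeated integration by parts of $x^{M-1}e^{-x/P}$. Substituting $x=\beta w$ and using $\int_{\sigma^2}^{\infty} f_W(w)\,dw = 1$, the constant term contributes $1$, and every remaining term becomes a single scalar multiple of an integral of the form $\int_{\sigma^2}^{\infty} w^{k} e^{-(\lambda + \beta/P)w}\,dw$ after collecting the two exponentials $e^{-\beta w/P}$ and $e^{-\lambda(w-\sigma^2)}$.

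The final step is to recognize each of those integrals as an upper incomplete Gamma function: the substitution $u = (\lambda+\beta/P)w$ yields
\begin{equation*}
\int_{\sigma^2}^{\infty} w^{k} e^{-(\lambda+\beta/P)w}\,dw \;=\; \frac{\Gamma\!\bigl(k+1,\,(\lambda+\beta/P)\sigma^2\bigr)}{(\lambda+\beta/P)^{k+1}}.
\end{equation*}
Pulling the prefactor $\lambda e^{\lambda\sigma^2}(\beta/P)^{k}/k!$ out of each term of the sum reproduces the claimed formula exactly.

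I expect no real obstacle here beyond bookkeeping: the technical content (independence of numerator and denominator, validity of the conditioning trick, the shifted-exponential normalization) is already in hand from the earlier proof, and the only subtle point is that the lower limit $\sigma^2$ of $W$ is precisely what forces the incomplete rather than complete Gamma function to appear. The hardest part is cosmetic: reconciling the factor $1/P^{K}$ in the theorem statement, which is a typographic slip for $1/P^{k}$ inside the summation, as the derivation above makes clear.
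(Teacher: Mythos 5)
Your proposal is correct and is essentially the paper's own argument: the paper starts from $\int_0^\beta f_y(y)\,dy$ using the double-integral representation of $f_y$ and swaps the order of integration, which lands exactly on your $\int_{\sigma^2}^{\infty} F_X(\beta w)\, f_W(w)\,dw$, after which both derivations use the Erlang finite-sum form of the Gamma CDF and evaluate $\int_{\sigma^2}^{\infty} w^{k} e^{-(\lambda+\beta/P)w}\,dw$ as an upper incomplete Gamma function. You are also right that the $1/P^{K}$ in the theorem statement is a typographical slip for $1/P^{k}$, as the paper's own final line $\frac{1}{k!}\left(\frac{\beta}{P}\right)^{k}\left(\lambda+\frac{\beta}{P}\right)^{-k-1}$ confirms.
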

\begin{proof}
Outage happens when a received SINR is below a defined threshold. The probability of the outage for the SINR term in \eqref{SINIR_distribution} is then calculated as follows:
\begin{equation}\label{out1}
P(y\leq \beta) =\int_{0}^{\beta} f_y(y)dy.
\end{equation}\label{cdf1}
by plugging $f_y(y)$ from \eqref{SINIR_distribution} we have:
\begin{align}\label{cdfderivation}  \nonumber
F_y(\beta) &= \int_{0}^{\beta} \frac{1}{P^M\Gamma(M)}\lambda e^{\lambda \sigma^2} y^{(M-1)} \int_{\sigma ^2}^{\infty}w^M e^{-\lambda w} e^{-(\frac{y}{p})w}dwdy\\ 
&\!=\!\frac{1}{P^M\Gamma(M)} \lambda\!e^{\lambda \sigma^2} \int_{\sigma ^2}^{\infty} w^M e^{-\lambda w} \int_{0}^{\beta} y^{(M-1)} e^{-(\frac{w}{P})y}dydw
\end{align}
By using the integration rules from \cite[page 340]{integrals_books} we can derive the integral as follows:
\begin{align}\label{cdfderivation2}\nonumber
F_y(\beta)  & = \frac{1}{P^M\Gamma(M)}\lambda e^{\lambda \sigma^2} \int_{\sigma ^2}^{\infty}\!w^M e^{-\lambda w} \int_{0}^{\beta} y^{(M-1)} e^{-(\frac{w}{P})y} dydw \\\nonumber
& = \frac{1}{P^M\Gamma(M)}\lambda e^{\lambda \sigma^2}  \int_{\sigma ^2}^{\infty}  w^M e^{-\lambda w} \Bigg( \frac{(M-1)! P^M}{w^m} -e^{-(\frac{\beta}{P})w} \sum_{k=0}^{M-1} \frac{(M-1)! P^M}{k!} \frac{\beta^k}{P^k w^{M-k}} \Bigg) dw \\\nonumber
& = \lambda e^{\lambda \sigma^2} \int_{\sigma ^2}^{\infty} w^M e^{-\lambda w}\!\Bigg( \frac{1}{w^M} - e^{-(\frac{\beta}{P})w} \sum_{k=0}^{M-1}  \frac{\beta^k}{k!P^k\!w^{M-k}}\!\Bigg)\!dw \\\nonumber
& = \lambda e^{\lambda \sigma^2} \Bigg(\int_{\sigma^2}^{\infty} e^{-\lambda w}\!dw - \int_{\sigma^2}^{\infty} w^m e^{-\lambda w} e^{-(\frac{\beta}{P})w} \sum_{k=0}^{M-1}\!\!\frac{\beta^k}{k!P^k w^{M-k}} dw \Bigg)  \\\nonumber
& = 1 - \lambda e^{\lambda \sigma^2}  \int_{\sigma^2}^{\infty} e^{-(\lambda + \frac{\beta}{P})w} \sum_{k=0}^{M-1} \frac{1}{k!} \frac{\beta^k w^{k}}{P^k} dw \\\nonumber
& = 1 - \lambda e^{\lambda \sigma^2} \sum_{k=0}^{M-1} \frac{1}{k!} (\frac{\beta}{P})^k \int_{\sigma^2}^{\infty} e^{-(\lambda + \frac{\beta}{P})w}w^{k}  dw \\
& = 1 - \lambda e^{\lambda \sigma^2} \sum_{k=0}^{M-1} \frac{1}{k!} (\frac{\beta}{P})^k (\lambda + \frac{\beta}{P})^{-k-1} \Gamma(k+1,(\lambda + \frac{\beta}{P})\sigma^2) 
\end{align}
therefore, the probability of outage is given by:
\begin{align}\label{outageinproof} \nonumber
P(y\leq \beta) = F_y(\beta)= 1 -  \sum_{k=0}^{M-1} \frac{1}{P^K k!} \lambda e^{\lambda \sigma^2}  \frac{\beta^k}{(\lambda + \frac{\beta}{P})^{k+1}}  \Gamma(k+1,(\lambda + \frac{\beta}{P})\sigma^2) 
\end{align}
This completes the proof.
\end{proof}
Throughout this section, have presented the idea that an MTD can transmit in the null-space of HTD transmissions without causing interference. However, to design an optimal system that can exploit this characteristic of the system, the CSI from all the MTDs to the BS must be known. Indeed, as outlined earlier since the signaling overhead makes CSI acquisition inefficient for short data packet MTC traffic and estimating channels from many MTDs to allocate resources to one of them is highly inefficient. Therefore, in the next section, we present a contextual MAB learning framework to implement OSO for MTC. We exploit the characteristics of the channel between MTDs and the BS. Then, this characterization is used for generating channels that we use in our learning algorithm for finding the best possible MTD for any given beamformer in the BS. 

\section{Exploiting Interference Diversity}\label{deep_contextual}
In this section, we provide a practical method to find the set of MTDs that fall into the null spaces of each receive beamformer at the BS. Our method uses two concepts. First, since most of the BSs are located at high altitude, there are mostly not many scatterers around the BS. This means that a signal coming from a specific angle will have an angle of arrival that is limited to a short range of angles in that specific direction. If the scatterers were around the BS, the angle of arrival would be distributed uniformly in $[0,2\pi]$. Second, we use deep learning for finding the set of best possible MTDs for each designed beamformer.

\subsection{Channel characteristic and the angle of arrival of the interference}
In wireless systems, several different copies of the signal that is transmitted reach the receiver from different angles. The angle of arrival of the signal to the receiver depends on the reflectors around the receive antenna and also the transmit antenna. When the scatterers are around the BS, then the angle of arrival of the received signal is distributed uniformly in $[0,2\pi]$. However, if the scatterers are located in a ring around the transmitter, then the angle of arrival of the signal to the receiver is within a small interval. This is presented in Fig. \ref{anglefigure} and is known as the small ring model. In cellular systems, since most of the MTDs are located in buildings and locations with many scatterers around them, then the signal is mostly reflected as shown in Fig. \ref{anglefigure}. However, since the BS in most cases is located at a high altitude (e.g., on top of a building), there are fewer reflectors, and therefore, the received signal's angle of arrival is in a small interval. Similar channel characteristics exist for the human-type device. Next, we first present how the channel is calculated from the propagation environment characteristics. Then, we propose our learning based approach for scheduling the MTDs. 

\subsubsection{Channel Modeling}
Consider a cellular device that is located in distance $d$ from the BS with an angle $\theta$ as shown in Fig. \ref{anglefigure}. The angular spread of the signal $\Delta$ is calculated from $\Delta \simeq \arctan(r/s)$ where $r$ is the radius of the ring around the cellular device that scatterers are located. We assume that there is no line of sight and therefore, the channel $\boldsymbol{h} \sim \mathcal{CN}(\boldsymbol{0}, \boldsymbol{R}_k)$ is calculated as follows:
\begin{equation}\label{eq:channelDecomposition}
	\boldsymbol h = \boldsymbol U \boldsymbol \Lambda^{\frac{1}{2}} \boldsymbol w
\end{equation}
where $\boldsymbol U$ is the tall unitary matrix of the non-zero eigenvalues of $\boldsymbol{R}_k$, $\boldsymbol{\Lambda}$ is the $r \times r$ diagonal matrix of whose diagonal elements are the non-zero eigenvalues of $\boldsymbol{R}_k$ and $\boldsymbol{w} \sim \mathcal{CN}(\boldsymbol{0}, \boldsymbol{I})$. Each element $(m,p)$ of the the covariance matrix for the channel is calculated from:
\begin{equation}\label{eq:covarianceMatrix}
	\left[\boldsymbol{R}\right]_{m,p} =  \frac{a_i}{2\Delta}\int_{-\Delta}^{\Delta} e^{-j\boldsymbol{k}^T{(\alpha + \theta)}\left(\boldsymbol{u}_m - \boldsymbol{u}_p\right)}d\alpha
\end{equation}
where $\boldsymbol{k}(\alpha) = -\frac{2\pi}{\lambda}(\cos(\alpha), \sin(\alpha))^T$ is the wave vector for a planar wave impinging AoA $\alpha$, the carrier wavelength is $\lambda$ and $\boldsymbol{u}_m$ and $\boldsymbol{u}_p$ are the positions of the antennas of the BS in the two dimensional coordinate systems.  Path loss and shadowing are included in $a_i = 10^{\frac{a_{i,dB}}{10}}$ where $a_{i,dB} = PL_{dB} + X_\sigma$ with $PL_{dB}$ and $X_\sigma$ denoting the path loss and log-normal shadowing with variance $\sigma$. We use the 3GPP path loss model from the BS to MTDs \cite{3GPPPathLossModel} which is given by $PL_{\textmd{dB}} = 128.1 + 37.6\log(d)$. By using \eqref{eq:channelDecomposition} and \eqref{eq:covarianceMatrix}, we can derive the channel for each link the system. For uniform linear array (ULA) which is a widely used model, \eqref{eq:covarianceMatrix} simplifies to:
\begin{equation}\label{eq:covarianceMatrixULA}
\left[\boldsymbol{R}\right]_{m,p} =  \frac{a_i}{2\Delta}\int_{-\Delta}^{\Delta} e^{-j2\pi{(m - p)}\sin(\alpha)}d\alpha.
\end{equation}

\begin{figure}[t]
	\begin{center}
		\includegraphics[width=0.5\textwidth]{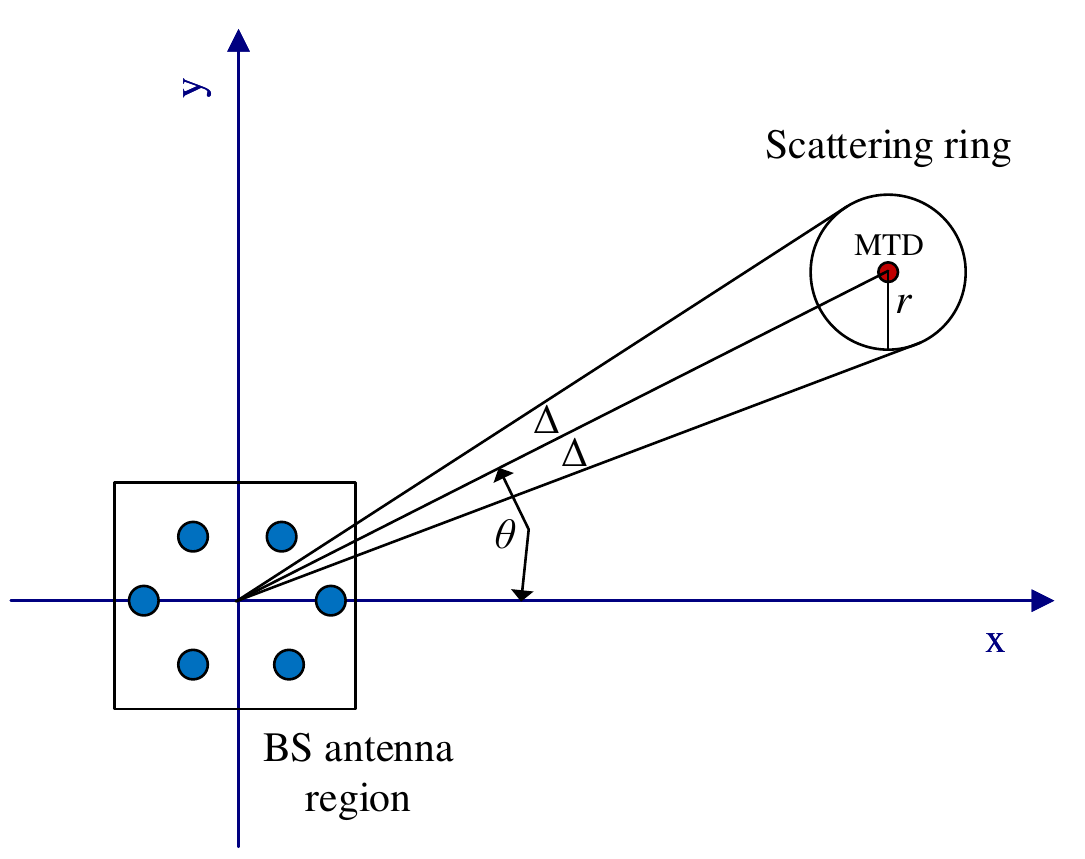}
	\end{center}
	\caption{\small Angle of arrival of the signal to the BS}
	\label{anglefigure}
\end{figure}

For MTDs and the human-type device, since the scatterers are located around the transmitter as shown in Fig. \ref{anglefigure}, we consider a random angle of arrival for each MTD and the HTD. We assume that each MTD has angle of arrival of $\theta_i$ and angular spread $\left[\theta - \zeta, \theta + \zeta\right]$. We assume that this interval fo AoA does not change for each MTD since the MTDs are either fixed or low mobility. However, during each coherence period, the AoA for the human-type device changes and at the beginning of the coherence period, it is estimated by using pilot signals. Therefore, this channel $\boldsymbol{h}$ in \eqref{eq:channelDecomposition} can be exactly calculated for human-type device at the beginning of each coherence period. This signaling for channel estimation is not performed for MTDs since they have small data packets and signaling is not efficient. Therefore, MTD channel knowledge can only be statistical, and, due to the existence of the noise term $\boldsymbol{w}$ in \eqref{eq:channelDecomposition}, the exact knowledge of the channel cannot be known for the MTDs. Therefore, we need to find a solution for selecting the best MTD for each given beamformer at the BS for human-type device without precise knowledge of MTD channels. In the following, we present a method based on contextual bandit theory to find the best MTD for transmission for each beamformer that is designed using the estimated channel of human-type device.

\subsection{Contextual Bandits}
In this section, we propose a learning method for selecting the best MTD for each receive beamformer by using methods from machine learning. Known as contextual bandits, our proposed solutions is an special case of MABs when the learning agent receives a context at each time, and, based on that context, selects an action. In the following, we first present MAB theory and the concept of contextual MABs. Then, we present two function approximations for contextual MAB problem of selecting best MTDs for resource sharing with human-type devices.
\subsubsection{Multi-armed bandits}
Reinforcement learning (RL) problems with a single state are called MABs. In an MAB problem, a decision maker (player) selects (plays) an arm from a set of available arms and receives a numeric reward for the selected arm. Rewards of each arm are drawn from a random variable with an underlying distribution that is not known to the decision maker. At each play, only the reward of the select arm is revealed to the player. The goal of the player is the minimize the cumulative regret over a long period of time. Regret is defined as the difference between the arm that is selected and the best possible arm could have been played. Total regret for the duration of $T$ of the learning period is given as:
\begin{equation}\label{regret01}
R(T) = \mathbb{E}\bigg [ \sum_{t=1}^{T}\theta^*(t) - \sum_{t=1}^{T}\theta_{k}(t) \bigg ],
\end{equation}
where the expectation is taken over the randomness in the algorithm and the revealed rewards. Naturally, the player will find the arm with the highest expected value and keep playing that arm. In a MAB problem, the main issue to resolve is the dilemma between exploration and exploitation. That is, to select the best arm that is known so far (exploitation) or play other suboptimal arms (explore) to have a better estimate of their reward distribution. There are many effective methods that can be used to solve exploration and exploitation dilemma such as $\epsilon-\text{greedy}$ or upper confidence bound (UCB) methods \cite{sutton1998reinforcement}. In our problem, each arm is an MTD that will be selected to be scheduled and the reward is the rate of the cellular user since each MTD transmission will cause a different amount of interference on the BS, and, therefore, affect the rate of the cellular user. Our problem has a major difference with the classical MAB problem described above. In the classical MAB, rewards of each arm are drawn from a fixed distribution, and, the best arm doesn't change over time. However, in our problem formulation, depending on the beamformer that is designed for the cellular user, the best arm will be different. Therefore, there is a \emph{context} that affects the best MTD (the context is the receive beamformer in the BS). This kind of problems is known as contextual MABs \cite{contextual_mab}. In contextual MABs, there is a need for a function $f(x)$ that gets the context as input and produces the best arm for that given input. Therefore, the aim of the learning is the find such a function that minimizes the regret over the learning period. We must state that the logarithmic regret is desirable in MAB as it means that at each new time step, the learning agent is making a smaller mistake. 

One of the well-known methods for solving contextual MAB problem is the so-called Thompson sampling method \cite{thompsonSampling}. In this method, for each action $a \in \mathcal{A}$, a distribution with parameters $\theta \in \Theta$ is considered that generates the reward $(r | a; \theta; x)$ where $x \in \mathcal{X}$ is the context from the set $\mathcal{X}$ of contexts, and $\Theta$ is the set of parameters for the distributions of the rewards. There is also a tuple of previous actions and rewards $\mathcal{D} = (x; r; a)$ that is available for the learning agent. The posterior distribution is $P(\theta | \mathcal{D}) \simeq P(\mathcal{D}| \theta) p(\theta)$ and the aim of the Thompson sampling method is to to maximize the expected rewards 
\begin{equation}\label{thompsonsampling}
\int \mathbb{I} \left[\mathbb{E}(r| a^{*}, x, \theta) = \max_{a^{\prime}} \mathbb{E}(r | a^{\prime}, x, \theta) \right] P(\theta | \mathcal{D}) d\theta
\end{equation}
where $\mathbb{I}$ is the indicator function. By using $P(\theta | \mathcal{D})$ at each time, parameters $\theta^{*}$ are sampled and then the action $a^{*}$ that maximizes $\left[\mathbb{E}(r| \theta^{*}, a^{*}, x) \right]$ is selected. In other words, the algorithm first considers a distribution for each arm and draws a sample from these distributions. Then, the agent acts greedily and selects the arm with the highest sample value and observes the reward. Based on the observed reward, parameters of these distributions are then updated and in the next round, the samples are drawn from the updated distributions. For each given context, this process is done separately, and, the aim is to solve estimate the parameters of these posterior distributions for new contexts. For solving a contextual MAB problem with a small number of contexts, we can run one learning algorithm for each context and store the learning information in tables. However, when the number of possible contexts is large (especially when it can have continuous values), then the problem becomes much more complicated, and, therefore, the function $f(x)$ must be approximated. This means that once the algorithm is trained on some contexts and actions, for any new context, it must be able to select the optimal action. In the next section, we present learning based function approximators for solving the contextual MAB problem based on Thompson sampling. 

\begin{algorithm}[t]
	\caption{Contextual Bandits}
	\begin{algorithmic}
		\State Receive prior distribution over models, $\pi_{o} : \theta \rightarrow [0,1]$
		\State Select each action $a$ once regardless of the context $\boldsymbol{q_{\rm c}}(t)$
		\State \textbf{for} $t=1$ \emph{to} $T$ \textbf{do}
		\State \quad \quad Observe $\boldsymbol{q_{\rm c}}(t)$, $\mathcal{D} = (\boldsymbol{q_{\rm c}}(t); r; a)$
		\State \quad \quad Sample the model $\theta_{t} \in R^{d}$
		\State \quad \quad Find $a_{t}$ = $\arg$$\max$ $P(\theta_{t} | \mathcal{D})$
		\State \quad \quad Update the posterior distributions $\pi_{t+1}$ with the observation $(\boldsymbol{q_{\rm c}}(t); r; a)$
	\end{algorithmic}
	\label{alg:probabilisticAlgoirthm}
\end{algorithm}

\subsection{Learning models}
As explained earlier, we need to design function approximators to sample the posterior distributions of Thompson sampling. Clearly, the aim is to minimize the regret over the learning period. The performance of learning based function approximators heavily depend on the choice of features that are fed into the learning algorithm and the type of the function approximator. In our problem formulation, we use a linear regression model and neural networks as the function approximator of the contextual MAB problem. To select the features that are fed into the function approximator we use the normalized receive beamformer. Since learning algorithms are mostly designed to work with real numbers, we separate the real and imaginary parts of the beamformer vector to design our features. Considering $M$ antennas in the BS, we build the feature vector $\boldsymbol{q_{\rm c}}(t) \in \boldsymbol{R}^{2M}$ at time $t$ as follows:
\begin{equation}\label{featurevector}
\boldsymbol{q_{\rm c}}(t)= \begin{bmatrix} \Re(\boldsymbol{w_{\rm c}}(t)) \\ \Im(\boldsymbol{w_{\rm c}}(t)) \end{bmatrix} / |\boldsymbol{w}_{\rm c}(t)|^2,
\end{equation}
where $\Re(\boldsymbol{w_{\rm c}}(t)) $ and $\Im(\boldsymbol{w_{\rm c}}(t))$ are the real and imaginary parts of the receive beamformer $\boldsymbol{w_{\rm c}}$ vector at time $t$. The contextual MAB algorithm $\mathcal{A}$ receives the context  $\boldsymbol{q_{\rm c}}(t)$ at time $t$, and, based on the internal model of the function approximator, selects an MTD for resource sharing. The BS calculates the rate of the cellular user at the BS, and, produces the normalized reward $r(t)$. The algorithm then updates the internal model of the function approximator based on the new data. In the framework of Thompson sampling for MAB problems, there are several several approaches that can be used for approximating the posterior distributions that are used for sampling and generating the possible reward for each MTD \cite{deep_contextual_bandits}. Moreover, these methods perform differently depending on the task at hand and analytical evaluation of the performance of these approaches is almost impossible and out of the scope of this work. We have used the implementation of all the approaches presented in \cite{deep_contextual_bandits} and selected several policies for sampling the data in our contextual MAB setting and a baseline uniform sampling policy. These methods are chosen due to their performance in our initial simulations both in terms of regret and computational time. The details of these methods are outlined as follows:

\subsubsection{Linear function approximation}
Consider the linear regression where the rewards for each arm is generated based on the following equation:
\begin{equation}\label{linearmodel}
r = \boldsymbol{q_{\rm c}}(t)^{T} \beta + \epsilon,
\end{equation}
where $r$ the reward that is generated, $\epsilon \sim \mathcal{N}(0, \sigma^2)$ and $\beta$ show parameters of the model that generate the reward for given context $\boldsymbol{q_{\rm c}}(t)$. The joint distribution of $\beta$ and  $\sigma^2$ are modeled which leads to sequentially estimating the noise level  $\sigma^2$ for each action. This leads to adaptive improvement of the parameters of the posterior distributions. 

The joint distribution of  the posteriors is given by $\pi_{t}(\beta, \sigma^2) = \pi_{t}(\beta | \sigma^2)\pi_{t}(\sigma^2)$. For the noise we consider the inverse Gamma distribution $\sigma^2 \sim IG(a_{t}, b_{t})$ and Gamma distribution for $\beta | \sigma^{2} \sim \mathcal{N}(\mu_{t}, \sigma^{2}\Sigma_{t})$.  Parameters of these distributions are given as follows:
\begin{equation}
\begin{aligned}
\Sigma_{t} & = (X^{T}X + \Lambda_0)^{-1}\\
\mu_{t} & = \Sigma_{t}(\Lambda_0\mu_{0} + X^{T}Y)\\
a_t & = a_0 + t/2\\
b_t & = b_0 + \frac{1}{2}(Y^{T}Y \mu_{0}^{T}\Sigma_{0}\mu_{0} - \mu_{t}^{T} \Sigma_{t}^{-1} \mu_{t}) 
\end{aligned}
\end{equation}
The hyper parameters are initialized as $\mu_{0} = 0$, $\Lambda_0 = \lambda \boldsymbol{I}M$ and $a_0 = b_0 = \eta >1$. After initialization, the parameters of the posteriors are updated after selecting each MTD by using linear regression model.

\subsubsection{Deep Neural Networks}
The above presented method of approximating the contextual MAB function with a linear model has limitation as it cannot model non-linearity is the mapping function from contexts to actions. To deal with non-linear mappings, neural networks can be used as function approximators. Recently, several neural network models are explored in \cite{deep_contextual_bandits} for solving the contextual bandit problems by approximating the posterior distributions of Thompson sampling. We experiment with the methods proposed in \cite{deep_contextual_bandits} and present the results in Section \ref{Simulationresults}. Here, in neural linear model, a Bayesian linear regression is applied after the last layer of the neural network. This can help in capturing the benefits of the both of linear and neural network models. Moreover, another method which is known as dropout is also considered, in which the outputs of the neural networks are randomly set to zero with some probability during the learning period \cite{droput}. This method is shown to improve the performance of neural networks is various tasks. 

To summarize, to solve the problem problem of contextual bandit learning for MTD selection task with unlimited number of possible beamformers as contexts, we have to use function approximations. First, a dataset is generated that includes beamformers that are generated by using MRC of the human-type device channel to the BS. Second, by using \eqref{eq:MBSSINR} the SINR for the human-type use considering the interference from all the possible MTDs is calculated. The dataset includes the feature vector of \eqref{featurevector} and all the SINRs. Then, for training the learning models, samples are randomly selected and used for training.
 
\section{Simulation Results}\label{Simulationresults}
For our simulations, we consider LTE parameters and we assume at each time step, the resources of the cellular user are shared with one MTD. Resources are primarily allocated to the cellular user and we assume that the CSI is acquired at the BS to design the uplink receive beamformer. Next, we first present the results to show that if there are a large number of MTDs to select from, the best MTD will cause a small amount of interference on the BS. Then, we present the plots for the distribution of the SINR and the outage probability. Finally, the results of deep contextual bandits are presented. Simulation parameters are given in Table \ref{tab:simparameters}.
\begin{table}[t]
	\def\tablename{Table}
	\centering
	\caption{Simulation parameters.} 
	\begin{tabular}{|c|c|} 
		\hline 
		Parameter & Value\\
		\hline\hline
		Number of antennas in BS ($M$) & 4\\
		\hline
		Cell radius & 500 m\\ 
		\hline
		MTA radius & 250 m\\
		\hline
		Bandwidth & 360 kHz\\
		\hline
		Noise figure at BS and MTA & 2 dB\\
		\hline
		CU to BS path loss model  & 128.1 + 36.7log($d$[km])\\
		\hline
		CU target SINR & 10 dB\\
		\hline
		MTD target SNR & 10 dB\\
		\hline
		Noise spectral density & -174 dBm/Hz\\
		\hline
		Angular spread $\Delta$ & $10^{\circ}$\\
		\hline
		Log-normal shadow fading & 10 dB \\
		\hline
		Location of antennas (y-axis) & $[-0.02, -0.01, 0.01, 0.02]$\\
		\hline 
	\end{tabular}
	\label{tab:simparameters}
\end{table}
\subsection{Existence of the null-space}
First, we study the possible effect of the number of MTDs on the SINR of the cellular user. Two transmit power control mechanisms for MTDs are considered. First, fixed transmit power from MTDs, and, second, a transmit power control to satisfy the SNR requirements at the MTA. In our simulations the MTA is equipped with a single antenna, and, MTDs use a basic distance based power control mechanism. The target SINR for the human-type user is set to $10$ dB and sharing the radio resource between MTDs and the human-type user, the degradation of the SINR of the cellular user due to MTD interference, is presented in Fig. \ref{fig:result01} for the case with fixed transmit power for MTDs, and in Fig. \ref{fig:result02} for the scenario with power control for the MTDs. 

Fig. \ref{fig:result01} shows the effect of interference from MTDs on the SINR of the human-type device for a different number of MTDs. This figure gives a clear example of how a  large number of MTDs provides the diversity to select one MTD with very small interference on BS. In these results, we assume that the MTDs do not perform power control and transmit with a fixed power all the time. We observe that even for a large transmit power of $10$ dBm, if there are $200$ MTDs to select from, the interference at the BS can be kept at a minimal level. However, for small transmit powers such as $0$ dBm, less than $100$ MTDs are enough to have a good performance for the human-type device while an MTD is sharing the same resources. 
\begin{figure}[t]
 \begin{center}
   \includegraphics[width=0.6\textwidth]{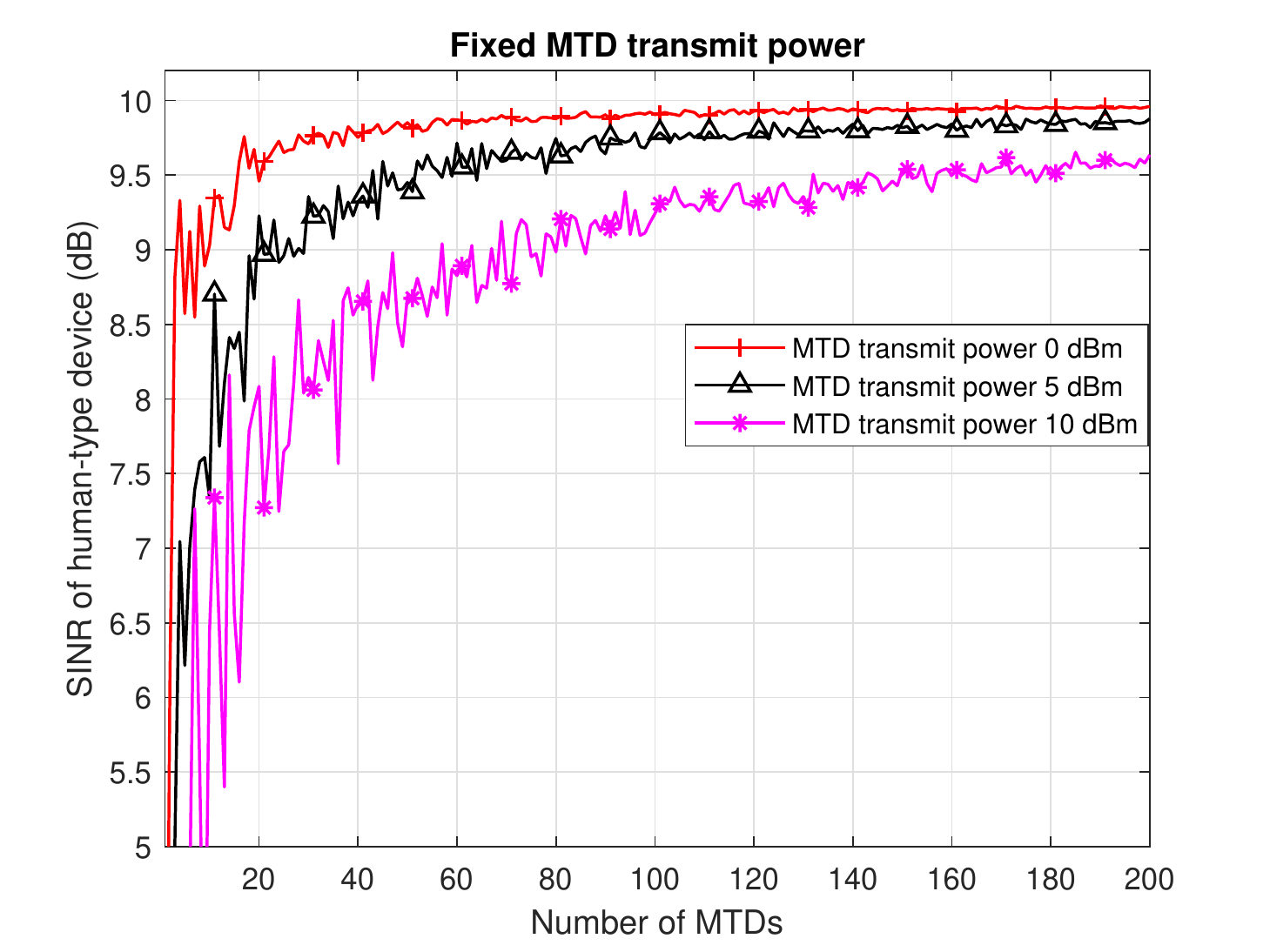}
 \end{center}
 \caption{\small SINR of the cellular user with fixed MTD transmit power.}
 \label{fig:result01}
\end{figure}

In Fig. \ref{fig:result02}, a similar plot to Fig. \ref{fig:result01} is given for the scenario of having power control for MTDs. Since the distance between each MTD and the MTA is typically small, MTDs transmit with low powers which leads to less interference. Clearly, higher SINR target values lead to higher transmit power, and, therefore, for larger values of SINR, a larger number of MTDs is required as selection candidates. Fig \ref{fig:result02} shows that for short distances between MTDs and the MTA, small transmit powers are needed, and, therefore, a smaller number of transmission candidates are needed in the system to find an MTD that causes no interference on the BS. Figs. \ref{fig:result01} and \ref{fig:result02} show that a large number of MTDs in MTC will make it possible to exploit the availability of null-space in the multi-antenna system for opportunistic interference management.

\begin{figure}[t]
 \begin{center}
   \includegraphics[width=0.6\textwidth]{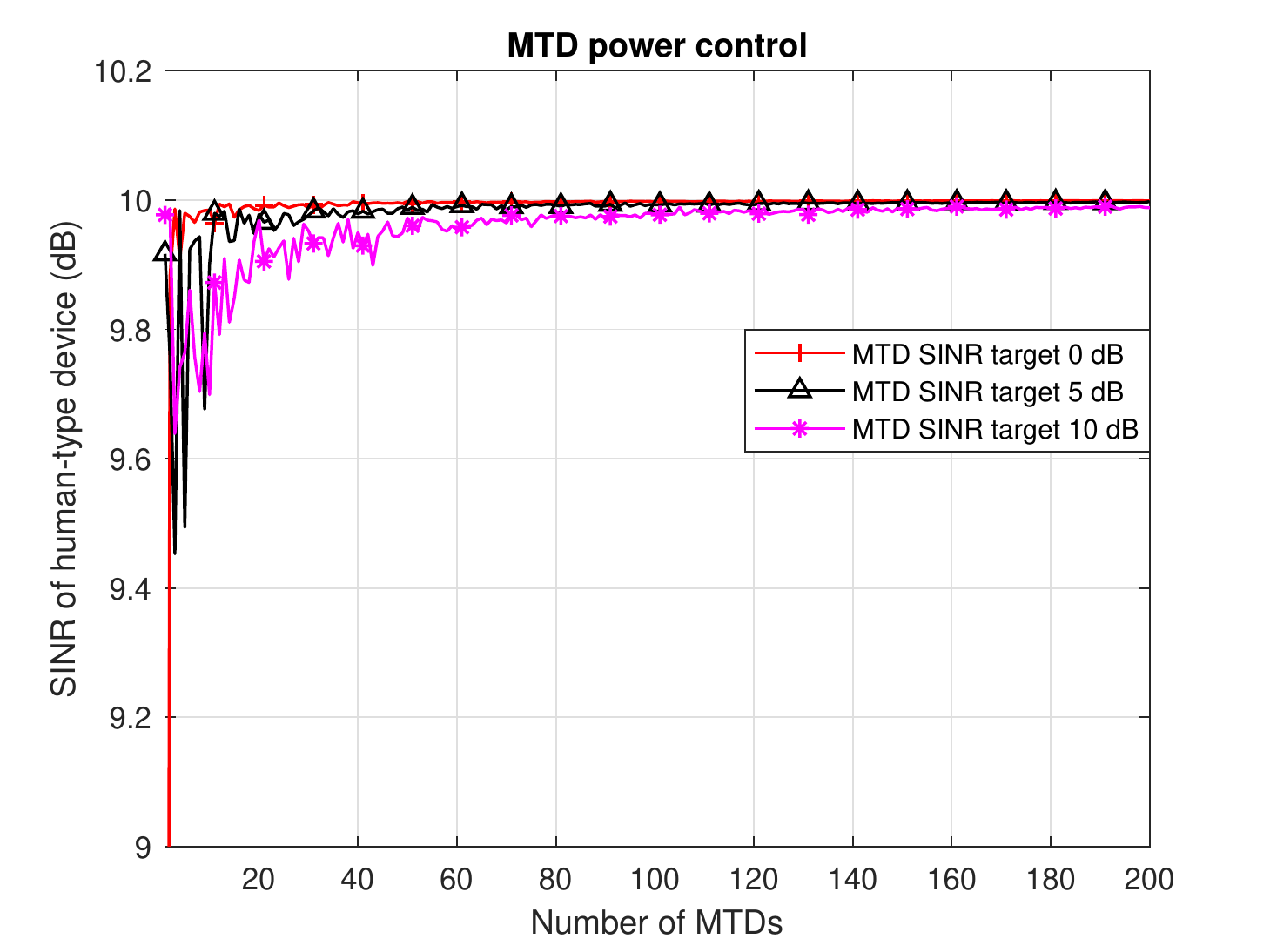}
 \end{center}
 \caption{\small SINR of the cellular user with MTD transmit power control.}
 \label{fig:result02}
\end{figure}

\subsection{Outage of human-type device}
Next, we present the simulation results for the outage probability of the cellular user in Fig. \ref{fig:ourage}. The outage probability is very small for a large number of MTDs and it is a decreasing function of number of MTDs. This validates the analytical results of Theorem \ref{theorem1} in Section \ref{problem}. Moreover, Fig. \ref{fig:ourage} shows that the idea of utilizing OSO for interference management in MTC can be realized by having around $100$ MTDs in the system. We also observe that with approximately $100$ MTDs in the system as possible interference candidates, the BS will most likely not be in the outage if the best MTD is selected. Moreover, this result helps us in selecting the number of required MTDs in the numerical analysis of the contextual bandits since we need to have a fixed number of MTDs for training the learning models. 

\begin{figure}[t]
	\begin{center}
		\includegraphics[width=0.6\textwidth]{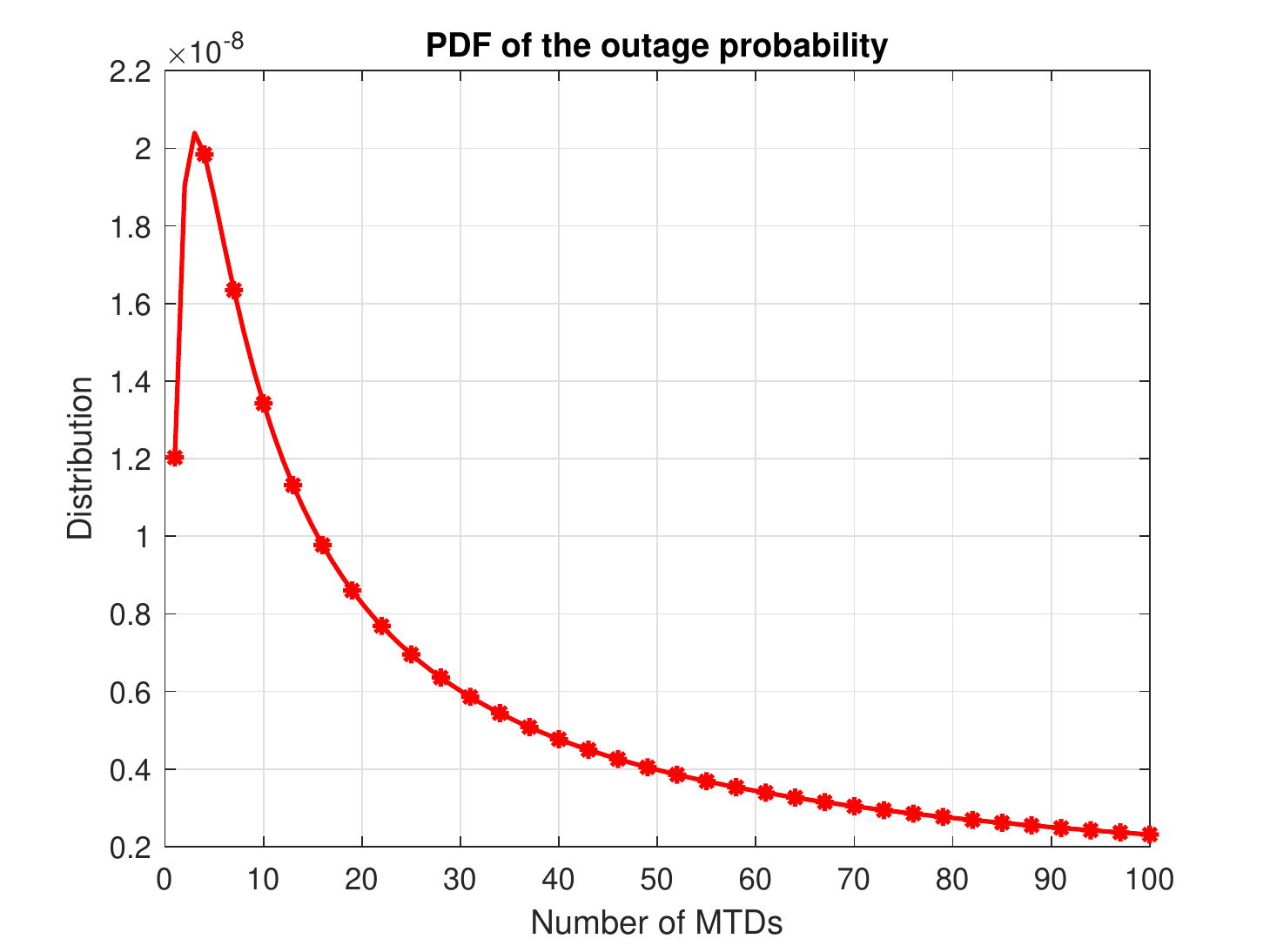}
	\end{center}
	\caption{PDF of the outage probability for different number of MTDs as transmission candidates.}
	\label{fig:ourage}
\end{figure}

\subsection{Contextual bandits}
Here, we present the results of the proposed contextual bandits for selecting the MTD with no CSI at the BS. The neural network that is used for approximating the function $f(x)$ two hidden layers. The input layer as $8$ neurons for a system with $4$ antennas. Each hidden layer is composed of $120$ neurons and the output layer has $80$ neurons where each neuron corresponds to one MTD. We run the learning algorithm for $20000$ iterations. 

\begin{table}[t]
	\def\tablename{Table}
	\caption{Contextual bandit rewards}
	\centering 
	\begin{tabular}{|c|c|}
		\hline 
		Algorithm & Cumulative reward\\
		\hline\hline 
		Optimal policy (full CSI) & 19996.04\\
		\hline
		Linear full posterior & 18043.06\\ 
		\hline
		Neural linear & 12639.06\\
		\hline
		Dropout & 8658.19\\
		\hline
		Uniform sampling & 7601.28\\
		\hline
	\end{tabular}
	\label{tab:deepbanditresults}
\end{table}
In Table \ref{tab:deepbanditresults}, we present the total reward received by our proposed methods compared to the optimal policy. The optimal policy is the one where the full CSI of all link was known at the BS. We can see that the linear full posterior is able to achieve to $90\%$ of the optimal policy. Clearly, this is a very good performance for a beamforming system with no CSI known at the BS. We can see that uniform sampling has very poor performance compared to linear full posterior. The uniform sampling is when the MTDs are selected uniformly at random. The results in Table \ref{tab:deepbanditresults} show that the contextual bandits with linear function approximation is capable of learning the various aspects of the wireless channel statistics between the MTDs and the BS. In other word, the contextual bandit is able to learn the angle of arrival of the interference from the MTDs and their signal strength, and, for any given beamformer, select the MTD with minimum interference on the BS.

In Fig. \ref{fig:regret}, we present the regret of the proposed policies. From Fig. \ref{fig:regret} we can clearly see that uniform sampling leads to linear regret. In MAB problems, sub-linear regret is desirable as it means that the number of times that the MAB algorithm selects a sub-optimal MTD  becomes smaller at each new time step. Dropout has poor performance as seen from the results in Table \ref{tab:deepbanditresults}, however, it starts to have a logarithmic regret after $14,000$ iterations. Logarithmic regret in MAB settings is considered as the optimal result \cite{sutton1998reinforcement}. Neural linear also achieves logarithmic regret, however, the total achieved rewards is not close to optimal policy. This shows that for the function approximation in our contextual MAB setting, neural networks do not perform well. Clearly, as seen from Table \ref{tab:deepbanditresults} and Fig. \ref{fig:regret}, linear full posterior sampling achieves very low regret and near-optimal results. 

\begin{figure}[t]
	\begin{center}
		\includegraphics[width=0.6\textwidth]{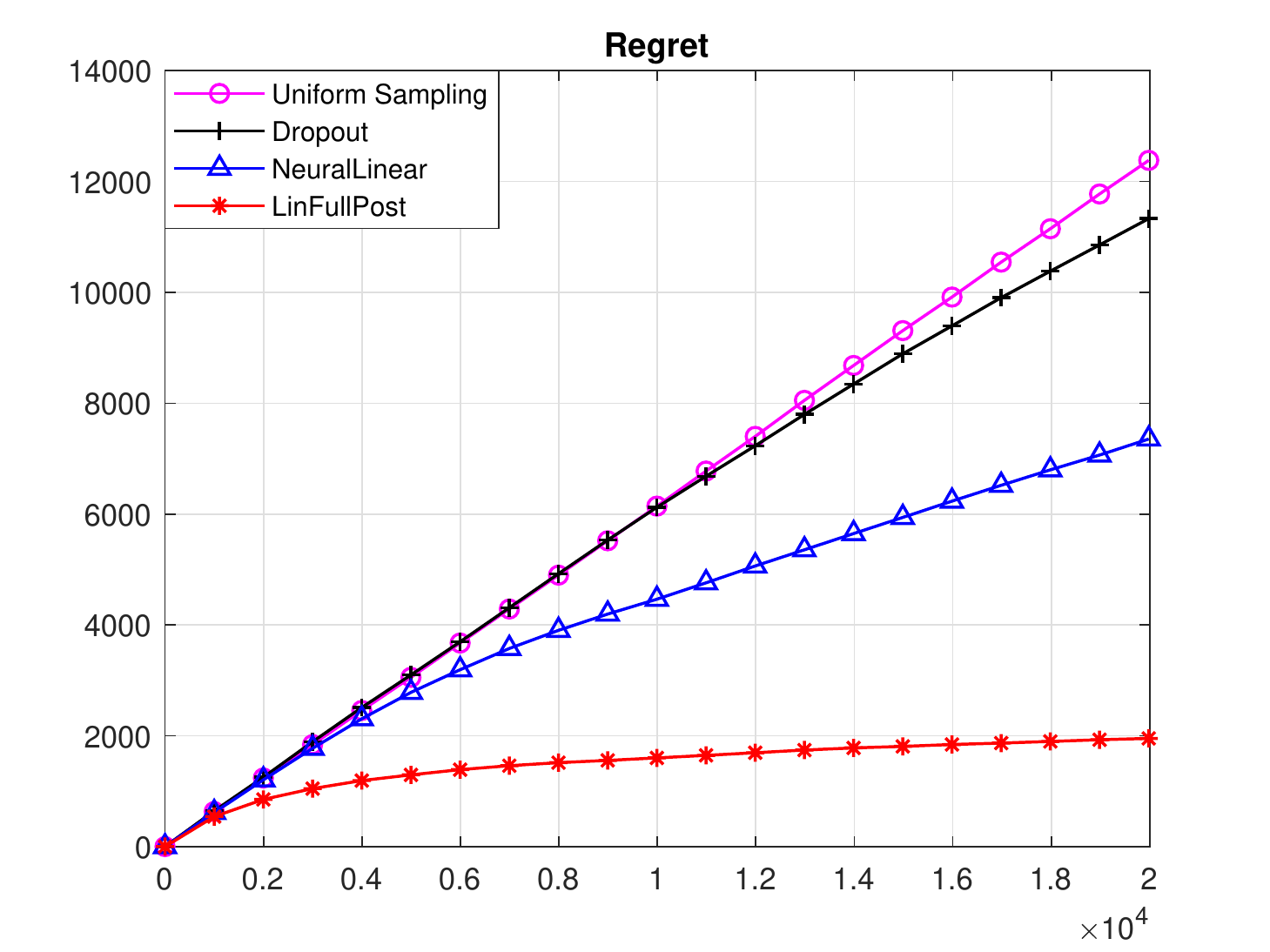}
	\end{center}
	\caption{\small Regret of deep contextual bandits compared to uniform sampling linear full posterior sampling.}
	\label{fig:regret}
\end{figure}

\begin{figure}[t]
	\begin{center}
		\includegraphics[width=0.6\textwidth]{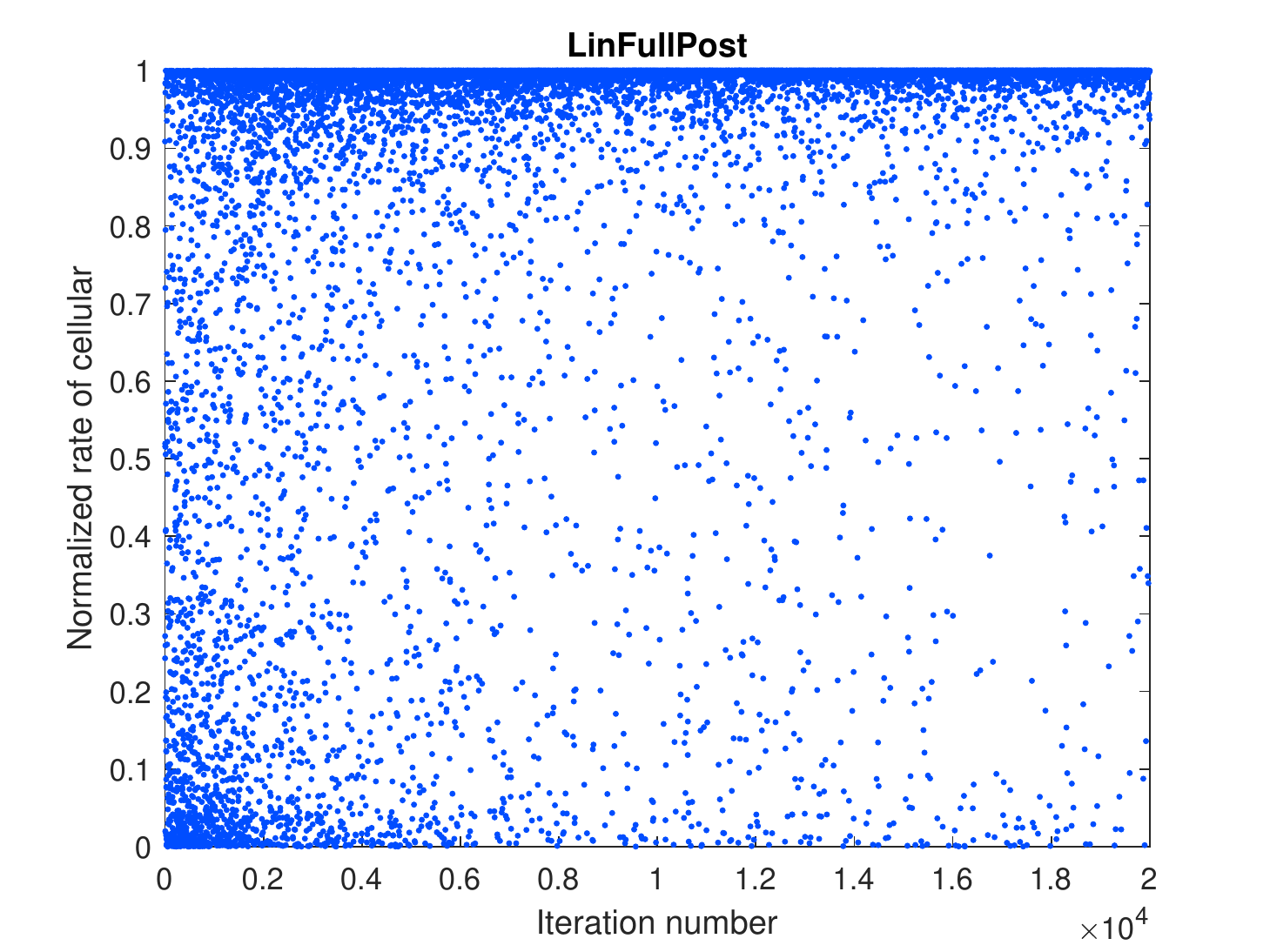}
	\end{center}
	\caption{\small Scatter plot of the normalized rate of the cellular user at each time step of the learning period.}
	\label{fig:scaterlinfullpost}
\end{figure}

\begin{figure}[t]
	\begin{center}
		\includegraphics[width=0.6\textwidth]{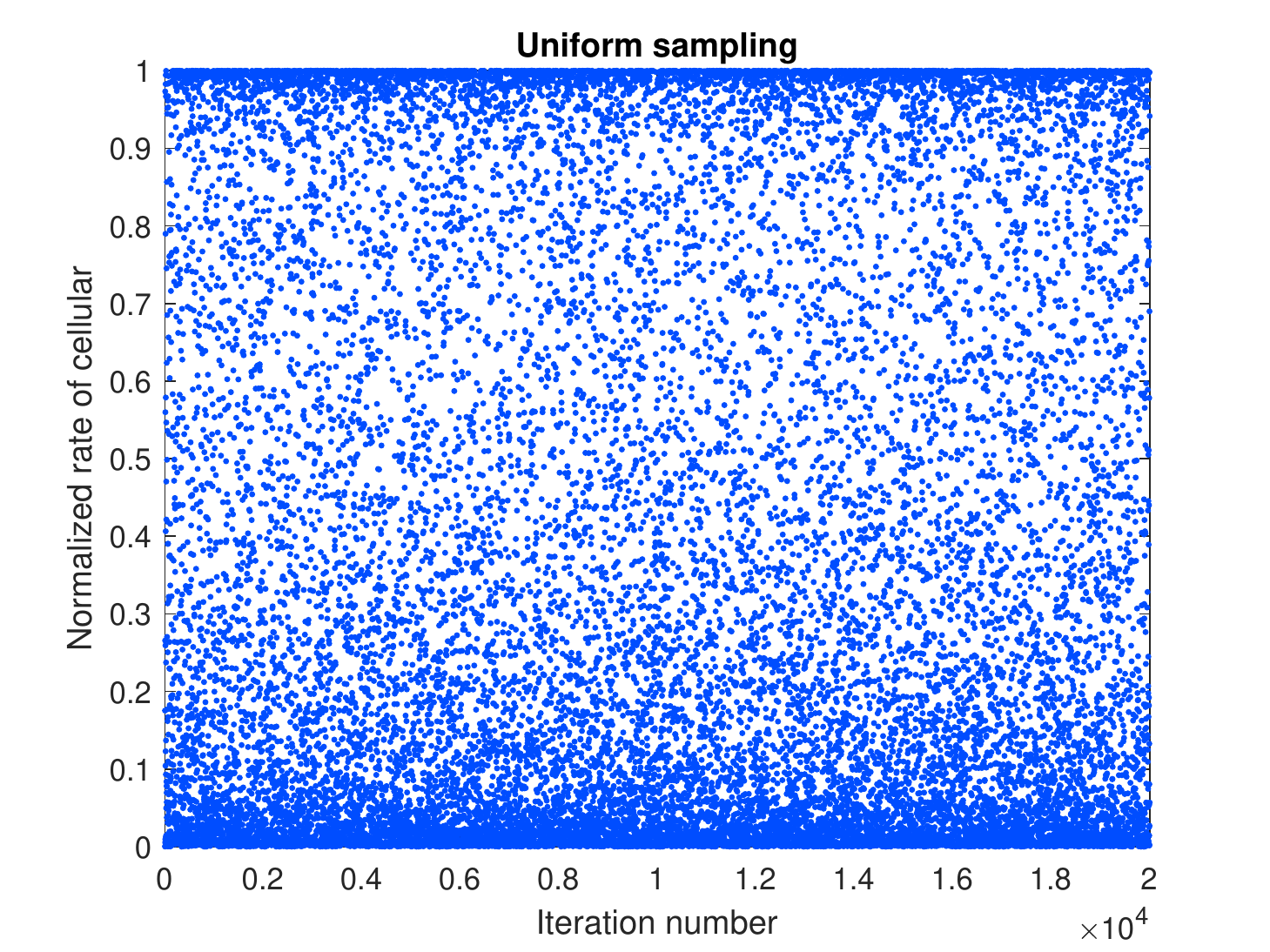}
	\end{center}
	\caption{\small Scatter plot of the normalized rate of the cellular user at each time step of the learning period for uniform sampling.}
	\label{fig:scatteruniform}
\end{figure}

To further evaluate the learning process, in Figs. \ref{fig:scaterlinfullpost} and \ref{fig:scatteruniform}, we show a the scatter plot of the reward that is received at each time steps for linear full posterior and uniform sampling respectively. We can observe from Fig. \ref{fig:scaterlinfullpost} that the density of the scatter plot is more or less uniform at the beginning of the learning period. However, as iteration numbers increase, the contextual MAB is able to select the MTD which lead to higher rewards. We also observe that the linear full posterior is selecting suboptimal MTDs even after it has learned the optimal MTDs. This is due to the noise that is introduced to the sampled posterior distributions and how these algorithm is addressing the exploration vs exploitation dilemma. Therefore, linear full posterior selects the suboptimal MTDs to have a better estimate of their reward distribution. Fig. \ref{fig:scatteruniform} shows that uniform sampling is simply making random selections and then the density of the scatter period during the entire learning period is unchanged. These two scatter plots present how the learning process is able to learn the optimal MTD for the given beamformer without any knowledge on the instantaneous CSI of the links between MTDs and the BS. 

\section{Conclusion}\label{conclutions}
In this paper, we have introduced the idea of OSO for MTCs in capillary networks and presented a contextual bandit learning framework for implementing it with no CSI at the BS for MTDs. First, we have introduced OSO as a method to exploit the natural available null-spaces of the receive beamformers in cellular systems. Second, we have provided a thorough analysis of the properties of the interference from various MTDs on the same beamformer. Moreover, we have given the theoretical analysis of the distributions of the SINR of the cellular user under interference from MTDs and also the distribution of the outage probability. Since implementing OSO requires CSI from all the MTDs at the BS, and, making such an assumption is impractical, we have provided a novel learning method using contextual MABs. Function approximations based on linear regression and neural networks are then used to sample posterior distributions of the well-known Thompson sampling for MABs. Simulation results have shown that, in the regime of massive IoT, it is possible to find an MTD that causes no harmful interference on the BS. Moreover, linear full posterior sampling has shown the best performance which has achieved logarithmic regret and only $10\%$ less total reward compared to the optimal MTD selection policy that requires full CSI of the MTDs. To the best of our knowledge, this is the first paper that introduces the idea of OSO for MTCs and presents a practical method to implement it.

\bibliographystyle{IEEEtran}
\bibliography{IoT_V07}
\end{document}